\newsavebox{\imagebox}
\definecolor{nblue}{rgb}{0.2,0.2,0.7}
\definecolor{ngreen}{rgb}{0.2,0.6,0.2}
\definecolor{nred}{rgb}{0.7,0.2,0.2}
\definecolor{nblack}{rgb}{0,0,0}
\definecolor{urlblue}{RGB}{30,19,156}
\definecolor{tssteelblue}{RGB}{70,130,180}
\definecolor{tssteelorange}{RGB}{161,57,65}
\definecolor{tsorange}{RGB}{255,138,88}
\definecolor{tsblue}{RGB}{23,74,117}
\definecolor{tsforestgreen}{RGB}{21,122,81}
\definecolor{tsyellow}{RGB}{255,185,88}
\definecolor{tsgrey}{RGB}{200,200,200}
\definecolor{mangotango}{rgb}{1.0, 0.51, 0.26}
\lstdefinestyle{mystyle}{
	breaklines=true,                 
}
\def\G{\mathcal{G}}
\def\bOmega{\bm{\Omega}}
\def\C{\mathbf{C}}
\def\I{\mathbf{I}}
\def\N{\mathbf{N}}
\def\M{\mathbf{M}}
\def\T{\mathbf{T}}
\def\X{\mathbf{X}}
\def\V{\mathbf{V}}
\newcommand{\zero}{\mathbf{0}} 
\def\d{\mathbf{d}}
\def\ax{\overrightarrow{\mathbf{x}}}
\def\bx{\overline{\mathbf{x}}}
\def\hx{\widehat{x}}
\def\hq{\widehat{q}}
\def\hp{\widehat{p}}
\def\vari{\text{Var}}
\renewcommand{\t}{^{\mbox{\tiny T}}}
\newcommand{\+}{^{\dagger}} 
\def\with{\text{with}}
\def\gat{Gaussian-atemporality}
\def\gar{Gaussian atemporality robustness}
\theoremstyle{definition}
\newtheorem{dfn}{Definition}
\newtheorem{thm}{Theorem}
\newtheorem{lem}[thm]{Lemma}
\newtheorem{obs}{Observation}
\newtheorem*{con*}{Conjecture}
\theoremstyle{remark}
\begin{document}

\title{Gaussian Atemporality: When Gaussian Quantum Correlations Imply Common Cause}

\author{Minjeong Song}
\email{song.at.qit@gmail.com}
\affiliation{Nanyang Quantum Hub, School of Physical and Mathematical Sciences, Nanyang Technological University, 637371, Singapore}
\affiliation{Centre for Quantum Technologies, National University of Singapore, 117543, Singapore}
\author{Jayne Thompson}%
\affiliation{College of Computing and Data Science, Nanyang Technological University}
\affiliation{Centre for Quantum Technologies, Nanyang Technological University, 637371, Singapore}
\author{Matthew S. Winnel}
\affiliation{Centre for Quantum Computation and Communication Technology, School of Mathematics and Physics, University of Queensland, St Lucia, Queensland 4072, Australia}
\author{Biveen Shajilal}
\affiliation{Quantum Innovation Centre (Q.InC), Institute of Materials Research and Engineering (IMRE), Agency for Science, Technology and Research (A$^\star$STAR), 138634, Singapore.}
\author{Timothy C. Ralph}
\affiliation{Centre for Quantum Computation and Communication Technology, School of Mathematics and Physics, University of Queensland, St Lucia, Queensland 4072, Australia}%
\author{Syed M. Assad}
\affiliation{Quantum Innovation Centre (Q.InC), Institute of Materials Research and Engineering (IMRE), Agency for Science, Technology and Research (A$^\star$STAR), 138634, Singapore.}
\author{Mile Gu}
\email{mgu@quantumcomplexity.org}
\affiliation{Nanyang Quantum Hub, School of Physical and Mathematical Sciences, Nanyang Technological University, 637371, Singapore}
\affiliation{Centre for Quantum Technologies, Nanyang Technological University, 637371, Singapore}

\begin{abstract} 

Conventionally, covariances do not distinguish between spatial and temporal correlations. The same covariance matrix could equally describe temporal correlations between observations of the same system at two different times or correlations made on two spatially separated systems that arose from some common cause. Here, we demonstrate Gaussian quantum correlations that are \emph{atemporal}, such that the covariances governing their quadrature measurements are unphysical without postulating some common cause. We introduce \gar\ as a measure of atemporality, illustrating its efficient computability and operational meaning as the maximum noise which can be added without removing this uniquely quantum phenomenon. We illustrate that (i) specific spatiotemporal Gaussian correlations possess an intrinsic arrow of time, such that \gar\ is zero in one temporal direction and not the other and (ii) that it measures quantum correlations beyond entanglement. 
\end{abstract}

\maketitle

\section{Introduction}
Multivariate Gaussian distributions are used almost universally in statistical sciences, providing a first-order description of correlations. Given random variables $X_A$ and $X_B$ that are post-processed to have zero means, the distribution is entirely specified by the covariance matrix $V_{\alpha,\beta} = \langle X_\alpha X_\beta \rangle$, $\alpha, \beta \in \{A,B\}$ -- and can be systematically estimated through repeated sampling. However, as per the iconic idiom `correlation does not imply causation', the correlation matrix alone cannot tell us anything about underlying causal relations between $X_A$ and $X_B$. 

\begin{figure*}[ht]
  \includegraphics[width=0.7\linewidth]{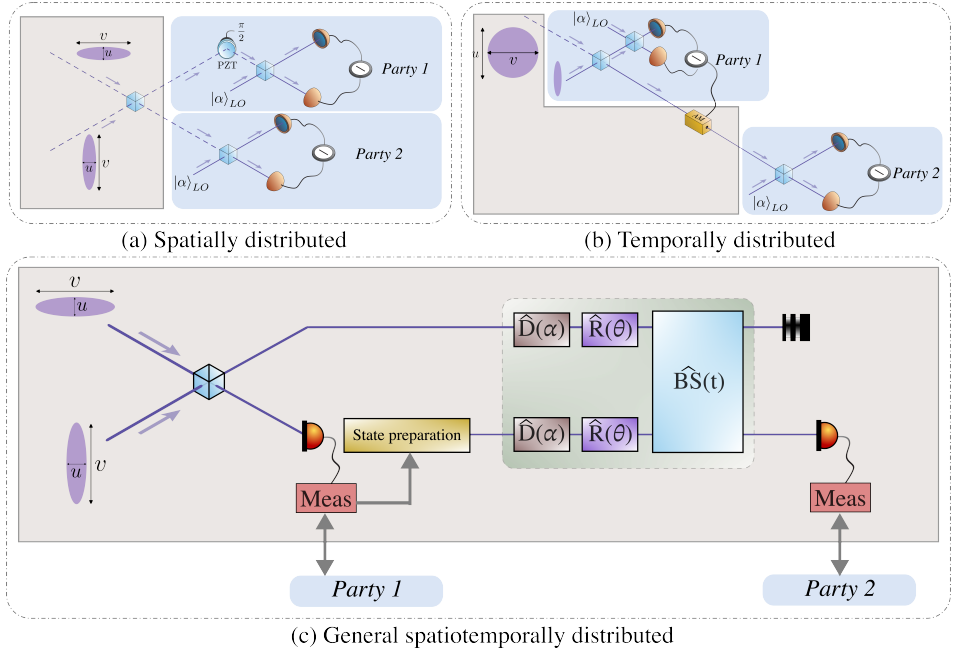}
  \caption{\textbf{Spatio-temporal causal distribution mechanisms.} The blue-shaded regions in the figure show the measurement stations of the parties involved, Alice and Bob. The respective parties perform Gaussian measurements, i.e., homodyne or heterodyne measurements. For simplicity of representation, a) and b) show homodyne measurements in each station, and parties are subject to any allocation for these station; Party 1 could be either Alice or Bob. Likewise, Party 2 could be Bob or Alice. This allows the above diagrammatic descriptions not to be confined by a specific causal order of Alice's and Bob's measurement stations. Consequently, the causal mechanisms can be classified into three types: Fig.~a) shows the spatially distributed mechanism where parties 1 and 2 perform the measurement simultaneously on a two-mode squeezed vacuum (TMSV) state. Party 1 can change the measurement basis by imparting an additional phase through the piezoelectric transducer~(PZT). Fig.~b) shows the temporally distributed mechanism where Party 1 performs continuous variable quantum non-demolition measurements~\cite{andersen2002nondemolotion,buchler2001nondemolition} on the input state and shares the residual state with Party 2. 
  The AM is the amplitude modulator, which is a part of the quantum non-demolition measurement. 
  Fig.~c) shows the general spatiaotemporally distributed mechanism that is practically implementable~\cite{Note_process}. The "Meas" represent the Gaussian measurements. The state preparation box represents Gaussian operations that are generally used to prepare states, namely the displacement operation~$\mathrm{\hat{D}}(\alpha)$, rotation~$\mathrm{\hat{R}}(\theta)$, and squeezing operation~$\mathrm{\hat{S}}(r)$, and it depends on the measurement basis and the outcome. 
  The displacement operation and rotation operation, along with the beamsplitting operation $\mathrm{\hat{BS}}(t)$, constitute the elements of the arbitrary two-mode unitary operation between the ancilla and one arm of the TMSV state.}
  \label{fig:cv_structure}
\end{figure*}

Consider the thought experiment where Alice and Bob are each located in their laboratories, such that one receives a system $A$ and the other a system $B$ that they choose to observe in two different ways to obtain respective random variables $X$ and $Y$ such that the resulting measurement statistics is a multivariate Gaussian. They present us with all resulting covariances $\langle X_\alpha X_\beta \rangle$, $\langle Y_\alpha Y_\beta \rangle$ and $\langle X_\alpha Y_\beta \rangle$, and three possible \emph{causal mechanisms} to explain observed correlations:
\begin{itemize}
    \item[(a)] Purely spatial -  where Alice and Bob receive two arms of an initially correlated system, corresponding to the interpretation of common cause (see Fig.~1a).
    \item[(b)] Purely temporal - where Alice and Bob are measuring the same system at two different times, corresponding to the interpretation of direct cause (see Fig.~1b).
    \item[(c)] A combination of the two mechanisms above, involving both direct cause and common cause contributions (see Fig.~1c).
\end{itemize}
Could we rule out any particular mechanism? Classical statistics say this is impossible~\cite{ried15advantage}. Any covariance matrix observed is consistent with all possible causal mechanisms.

However, quantum systems can exhibit significant differences. When Alice and Bob each receive a Gaussian state encoded within a quantum harmonic oscillator (e.g., a quantum mode of light, henceforth referred to as a qumode), all quadrature measurement statistics between Alice and Bob are described by a multivariable Gaussian. Yet, certain measurement correlations are \emph{aspatial} - ruling out explanation through purely spatial mechanisms~\cite{zhang20different}.

Here we ask the complementary question: what conditions imply covariances are \emph{atemporal}, such that they require us to postulate that $A$ and $B$ share a common cause? In doing this, we introduce \gat\ as incompatibility with a temporal causal structure. We then show that (i) for standard spatial two-mode continuous-variable states, \gat\ provides a measure of non-classicality distinct from entanglement, and (ii) \gat\ exhibits asymmetry under time reversal, such that certain two-mode spatial-temporal Gaussian correlations have an intrinsic causal arrow of time.

When combined with previous results on aspatial Gaussian spatiotemporal states, our results allow full classification of Gaussian quantum correlations based on their compatibility with various causal mechanisms. This mirrors recent developments in the study of spatiotemporal correlations on qubits \cite{ried15advantage,fitzsimons15pdo,song2024class, song2025bipartite, zhao18pdogeometry,hu18discrimination,marletto19pdoentanglement,pisarczyk19pdocausallimit,marletto2020non,zhang20observationscheme,marletto21temporaltele,utagi2021pdomarkov,jia23marginal,liu2024arrowoftime,liu2025lighttouch,fullwood2025pdo,liu2025randomized}
. Meanwhile we illustrate several unique advantages of studying spatiotemporal correlations in Gaussian regime. First, it facilitates \emph{\gar}, an operational and computable measure of \gat\ that captures the maximum amount of Gaussian noise we can add without breaking \gat. Second, it enables us to determine atemporality analytically without numerical minimization, and thus enables an explicit expression of \gar.

\section{Preliminaries}
\subsection{Notation and Framework}

We first introduce core concepts and notation. We use hat notation for infinite-dimensional operators. Given a quantum mode (qumode), we denote associated `position' operator and `momentum' operators respectively by $\hq$ and $\hp$. The canonical commutation relations read~\footnote{It is the widely accepted convention to omit the identity operator from the canonical commutation relation.}  
$$[\hq,\hp]= i\hbar := 2i,$$ 
whereby we adopt the natural units $\hbar:=2$. We consider two single-mode Gaussian systems, each labeled by $A$ and $B$ for brevity, however, our results can be generalized to any bipartite Gaussian systems consisting of $N$ modes for any finite $N$. The quadrature operators for the composite systems $AB$ are represented by an operator-valued vector 
$$\ax_{AB} \equiv (\hq_A,\hp_A,\hq_B,\hp_B)\t,$$ 
whose 
$j^{\mathrm{th}}$ component is denoted by $\hx_{AB,j}$, and $(\cdot)\t$ represents the transpose map. Whenever the context is clear, we omit the subscript $_{AB}$. The canonical commutation relation among these quadrature operators can be concisely written as 
$$ [\ax,\ax\t]= 2i\bOmega_2, \quad \text{with } \bOmega_2 \equiv (\begin{smallmatrix}
\bOmega & \zero \\
\zero & \bOmega
\end{smallmatrix}),$$ 
where $[\ax,\ax\t]$ represents the commutator of respective entities of $\hx$ in a outer product form, i.e., $[\hx_j,\hx_k]= 2i\bOmega_{2,jk}$. The matrix $\bOmega$ denotes the symplectic form, the traceless anti-symmetric matrix, i.e., $\bOmega\equiv (\begin{smallmatrix}
0 & 1 \\
-1 & 0
\end{smallmatrix})$, and $\zero$ denotes the zero matrix. We also sometimes use $\zero$ to denote the zero vector, depending on the context. Boldface letters are used to specify finite-dimensional matrices and vectors. 

When the measurement statistics of quantum systems that are spatially distributed, any general state can be represented by a density matrix. In the case of qumodes, these become infinite-dimensional. Gaussian states characterize the subset that can be completely represented by a mean vector and a covariance matrix. The mean vector $\bx$ is the first statistical moment of quadrature operators, defined as $\bx_{j}\equiv \expval{\hx_j}$. The covariance matrix $\V$ is the second moment, defined as $V_{jk} \equiv \frac{1}{2}\expval{\Delta\hx_j \Delta\hx_k + \Delta\hx_k\Delta\hx_j}$, with $\Delta\hx_j \equiv \hx_j - \expval{\hx_j}$. Thus, if Alice and Bob communicate this information to us, we have tomographical knowledge of any spatial bipartite Gaussian state.

The covariance matrix (CM), with its sub-matrices provides a concise method to express this information. Formally, we define a CM by 
\begin{eqnarray}
    \V_{AB} &=& \begin{pmatrix}
    \V_A & \C \\
    \C\t & \V_B
    \end{pmatrix}, 
\end{eqnarray}
where $V_{A,jk} \equiv \frac{1}{2}\expval{\Delta\hx_{A,j} \Delta\hx_{A,k} + \Delta\hx_{A,k}\Delta\hx_{A,j}}$ and $V_{B,jk} \equiv \frac{1}{2}\expval{\Delta\hx_{B,j} \Delta\hx_{B,k} + \Delta\hx_{B,k}\Delta\hx_{B,j}}$ represent local CMs. We call $C_{jk}\equiv \frac{1}{2}\expval{\Delta\hx_{A,j} \Delta\hx_{B,k} + \Delta\hx_{B,k}\Delta\hx_{A,j}}$ the \emph{cross-correlation matrix} since it accounts for the correlations between two modes. Note that $\hx_{A,j}$ and $\hx_{B,j}$ are $j$th quadrature operators of local modes, $A$ and $B$, respectively, i.e., $\hx_{A,j} \in \{\hq_A, \hp_A\}$ and $\hx_{B,j} \in \{\hq_B, \hp_B\}$. Meanwhile $\V_A$ and $\V_B$ describe local measurement statistics at $A$ and $B$.

When considering bipartite correlations, Alice and Bob are free to choose the orientations in phase space that define position and momentum as well as the location of the origin. There then exists a choice of quadrature basis $\ax^\star_{AB}$ such that their mean vectors become trivial, i.e., $\bx^\star_A=\zero$ and $\bx^\star_B=\zero$, and 
their local CMs do not have off-diagonal terms, i.e., $\V_A^\star =(\begin{smallmatrix}
            \vari(\hq_A^\star) & 0 \\
            0 & \vari(\hp_A^\star)
        \end{smallmatrix})$ and $\V_B^\star = (\begin{smallmatrix}
            \vari(\hq_B^\star) & 0 \\
            0 & \vari(\hp_B^\star)
        \end{smallmatrix})$.
We refer to this as a standard basis. In practice, Alice (Bob) can always determine the standard basis by performing various quadrature measurements, while Bob (Alice) does nothing. In the following, we will suppose that Alice's and Bob's measurement bases are standard, and omit the superscript $^\star$ for brevity.

Once the standard basis is determined, the elements of the CM can be determined through a standard measurement procedure: Alice randomly chooses i) to do nothing, ii) to implement a non-demolition homodyne measurement in a basis $\hq_A$, or iii) in a basis $\hp_A$. Similarly, Bob also independently chooses i) to do nothing, ii) to implement a non-demolition homodyne measurement in a basis $\hq_B$, or iii) in a basis $\hp_B$. By repeating this procedure a sufficient number of rounds, they will be able to determine the elements of the CM to any desired level of accuracy.

\subsection{Spatial-temporal Gaussian states}

While the measurement procedure above typically assumes that Alice and Bob share a spatially distributed state, the same procedure can be applied  regardless of underlying causal mechanism. We now return to the general scenario where Alice and Bob are situated in separate laboratories, where they each receive a Gaussian continuous-variable state. Let their local quadrature operators be in the standard basis. Alice and Bob then perform the same measurement procedure as above, whereby the resulting covariance matrix defines a \emph{space-time statistical covariance matrix} in standard form. More formally:

\begin{dfn}[Space-time covariance matrix in standard form]
    Let Alice and Bob each have access to a qumode with standard basis $(\hq_A,\hp_A)$ and $(\hq_B,\hp_B)$, with locally diagonal covariance matrices $\V_A$ and $\V_B$. Let Alice and Bob then measure their quadrature operators at random, with resulting outcomes governed respectively by random variables $\hx_{A,j}$ and $\hx_{B,k}$ and the cross-correlation matrix $C_{jk} = \frac{1}{2}\expval{\hx_{A,j}\hx_{B,k}+\hx_{B,k}\hx_{A,j}}$. The space-time covariance matrix is then given by the matrix
    \begin{eqnarray}
    \V_{AB} &=& \begin{pmatrix}
    \V_A & \C \\
    \C\t & \V_B
    \end{pmatrix}, \label{eq:submat}
    \end{eqnarray}   
\end{dfn}

While the form looks identical to that of the standard covariance matrix, we note that this definition no longer assumes there is an underlying bipartite quantum state $\rho_{AB}$ shared between Alice and Bob. In particular, it is agnostic to underlying causal mechanisms. Therefore, they extend the conventional covariance matrix, and were originally proposed as a unified representation to describe space-time Gaussian states~\cite{zhang20different}. 

The local CMs, $\V_A,\V_B$ correspond to conventional CMs that describe local statistics. They are positive definite matrices that obey the uncertainty relation. However, the global space-time covariance matrix $\V_{AB}$ does not need to obey such constraints -- owing to the possibility of arising from temporal causal mechanisms. Consider a temporal causal structure where Alice receives an initial Gaussian state $\rho_A$ which is then sent to Bob through a Gaussian channel $\G:A \to B$. Alice measures the state before passing it through the channel $\G$, and Bob measures the subsequent output.
The local CMs, $\V_A$ and $\V_B$ are uniquely determined by the state received by Alice $\rho_A$ and the state received by Bob $\G(\rho_A)$. The cross-correlation matrix $\C$ then has components 
\begin{eqnarray*}
    C_{jk} = \int \dd{x_A}x_A \tr(\widehat{\Pi}^{(j)}_{x_A}\rho_A)\tr(\G(\widehat{\Pi}^{(j)}_{x_A})\hx_{B,k}),
\end{eqnarray*}
where $\widehat{\Pi}^{(j)}_{x_A}$ represents the projector associated with the eigenvalue $x_A$ of the quadrature operator $\hx_{A,j}$ \cite{zhang20different}. This corresponds to the following scenario: the initial state $\rho_A$ is measured with $\hx_{A,j}$, the state collapses to $\widehat{\Pi}^{(j)}_{x_A}$ based on the measurement outcome $x_A$, and it undergoes the channel $\G$. After that, $\G(\widehat{\Pi}^{(j)}_{x_A})$ is measured with $\hx_{B,k}$. Ref.~\cite{zhang20different} studied the case of the vacuum state evolving under the noiseless channel -- its corresponding covariance matrix is not positive definite and therefore is not physical spatial state. From this, we see that Gaussian correlations exist that are \emph{aspatial} - they are not compatible with any purely spatial causal mechanism. Here, we wish to determine conditions in which correlations are atemporal, such that no $\G$ exists. 

\section{Results}
When studying the scenario where Alice and Bob each received a qubit, specific Pauli measurement correlations were also atemporal: they cannot arise from any purely temporal causal mechanism. This then allowed a classification of all spatial-temporal correlations based on causal compatibility \cite{song2024class}. Here, we extend this to quadrature correlations on Gaussian states by asking: can continuous variable Gaussian systems exhibit \emph{Gaussian-atemporality}, such that their correlations cannot be explained by any Gaussian temporal causal mechanism?

\subsection{Temporally-Compatible Gaussian states}
We first examine the class of correlations compatible with temporal causal mechanisms. Consider the case where Alice measures her mode, which then evolves by some Gaussian channel $\G: A \to B$ before being given to Bob. In the standard basis, we can drop phase-space displacements. Such Gaussian channels can then be effectively defined by a transformation matrix $\T$ and a noise matrix $\N\ge\zero$ \cite{weedbrook12gaussian, holevo2001evaluating}, such that when a Gaussian state with covariance matrix $\V_{A}$ is injected as input, it emits a Gaussian state with covariance matrix
\begin{eqnarray}
    \V_{B} &= \T\V_{A}\T\t + \N. \label{eq:tn}
\end{eqnarray}
Furthermore, we prove the following lemma that allows us to write the resulting cross-correlation matrix between Alice and Bob in terms of $\T$ (see \cref{app:proofs} for the proof):

\begin{restatable}{lem}{temporal}
    Consider a temporal causal structure where an initial Gaussian state, characterized by a CM $\V_A$ evolves in time through a Gaussian channel $\G:A\to B$. Let $\T$ be a transformation matrix associated with $\G$. Then, the cross-correlation matrix $\C$ of the space-time CM,  i.e., $\V_{AB}\equiv ( \begin{smallmatrix}
\V_A & \C \\
{\C}\t & \V_B
\end{smallmatrix} )$, is given by 
     \begin{eqnarray*}
        \C = \V_A\T\t. 
     \end{eqnarray*} 
    \label{lem:temporal}
\end{restatable}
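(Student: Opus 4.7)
The strategy is to compute $C_{jk}$ directly from the integral definition of the cross-correlation given immediately before the lemma, using the Heisenberg-picture action of the Gaussian channel on linear quadratures to reduce the problem to a single moment of $\rho_A$.

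First, I would invoke a Stinespring dilation of $\G$: introduce an ancilla $E$ in a zero-mean Gaussian state $\rho_E$, together with a Gaussian unitary $\hat{U}$ on $A\otimes E$, such that $\G(\sigma_A)=\tr_E(\hat{U}(\sigma_A\otimes\rho_E)\hat{U}\+)$. Because $\hat{U}$ is Gaussian, its symplectic action on the joint quadrature vector $(\hx_A,\hx_E)\t$ is linear, and the block mapping $\hx_A$ into the output quadratures on $B$ is exactly $\T$ by the definition preceding \cref{eq:tn}. Hence in the Heisenberg picture $\hat{U}\+\hx_{B,k}\hat{U}=\sum_l T_{kl}\hx_{A,l}+\hat{n}_k$, where $\hat{n}_k$ is a linear combination of environment quadratures and therefore satisfies $\langle\hat{n}_k\rangle_{\rho_E}=0$. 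Taking the trace against $\sigma_A\otimes\rho_E$ then yields the key identity
\begin{equation*}
   \tr(\G(\sigma_A)\hx_{B,k})=\sum_l T_{kl}\,\tr(\sigma_A\hx_{A,l}),
\end{equation*}
valid for any input $\sigma_A$.

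Next, I would specialise to $\sigma_A=\widehat{\Pi}^{(j)}_{x_A}$. The $l=j$ summand immediately contributes $T_{kj}x_A$. For $l\neq j$, the expectation of $\hx_{A,l}$ in an eigenstate of $\hx_{A,j}$ is naively ill-defined (the eigenstate is improper), so I would regularise $\widehat{\Pi}^{(j)}_{x_A}$ by a narrow Gaussian peaked at $x_A$ in the $j$-th quadrature; by reflection symmetry about $x_A$ in the complementary direction, this expectation vanishes at every finite width and hence in the limit. Substituting back into the defining integral yields
\begin{equation*}
    C_{jk}=T_{kj}\int\dd{x_A}\,x_A^2\,\tr(\widehat{\Pi}^{(j)}_{x_A}\rho_A)=T_{kj}\langle\hx_{A,j}^2\rangle_{\rho_A}=T_{kj}V_{A,jj},
\end{equation*}
where the last step uses that in the standard basis the mean of $\hx_{A,j}$ vanishes. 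Because the same standard-basis assumption makes $\V_A$ diagonal, $V_{A,jj}T_{kj}=\sum_l V_{A,jl}T_{kl}=(\V_A\T\t)_{jk}$, and the lemma follows.

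The main obstacle is the specialisation step: making rigorous sense of the expectation of one quadrature in an improper eigenstate of the complementary quadrature. Gaussian regularisation is the cleanest workaround, but one should verify that the regulariser limit commutes with the outer $x_A$-integral. An attractive alternative is to bypass eigenstates entirely and simply read off the $(A,B)$ cross-block of the joint covariance of $\hx_A$ and $\hat{U}\+\hx_B\hat{U}$ on $\rho_A\otimes\rho_E$, which by the symplectic decomposition above equals $\V_A\T\t$ directly, eliminating the detour through post-measurement projectors altogether.
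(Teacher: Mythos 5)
Your proof follows essentially the same route as the paper's: decompose $\G\+(\hx_{B,k})$ as $\sum_l T_{kl}\hx_{A,l}$ plus a zero-mean environment quadrature, substitute into the defining integral for $C_{jk}$, and eliminate the $l\neq j$ cross terms by a symmetry argument about the improper eigenprojectors, leaving $T_{kj}\langle\hx_{A,j}^2\rangle = V_{A,jj}T_{kj}$. The delicate step you flag --- the expectation of a complementary quadrature in an improper eigenstate --- is handled in the paper's own proof by the equally informal identity $\int\dd{x'_A}\,x'_A=0$, so your Gaussian regularisation is, if anything, a more careful treatment of the same point.
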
 
Combining with Eq.~(\ref{eq:tn}), we obtain a complete \emph{closed-form expression} of the space-time CM $\V_{AB}$ of temporally distributed Gaussian states; given $\V_A, \T$ and $\N$, the sub-matrices $\V_{B}$ and $\C$ can be written as
     \begin{eqnarray}
     \begin{aligned}
        \V_{B} &= \T\V_{A}\T\t + \N, \qquad 
        \C &= \V_A\T\t. 
     \end{aligned}
     \end{eqnarray} 

Note that $\G$ is a valid Gaussian channel, represented by a completely positive (CP), Gaussian-preserving, and trace-preserving map. Therefore, its transformation and noise matrices, $\T$ and $\N$ must obey the CP condition \cite{serafini23qcv}
     \begin{eqnarray}
         \N+i\bOmega -i \T\bOmega\T\t \ge \zero. \label{eq:con_cp}
     \end{eqnarray} 
This then places a constraint on what space-time correlations are temporally compatible.

\subsection{Necessary and Sufficient Conditions for Atemporality}
The constraint above also provides a pathway to identify continuous-variable indicators of atemporality. Regardless of the underlying causal mechanism that generated a space-time covariance matrix $\V_{AB}$, we first assert that it is generated causally via a Gaussian channel from $A$ to $B$. This allows us to define a class of \emph{Gaussian pseudo-channels}:

\begin{dfn}[Forward Gaussian Pseudo-Channels]
Given space-time CM $\V_{AB}$ of two modes $A,B$ in standard form, an associated \emph{forward Gaussian pseudo-channel} $\overrightarrow{\Lambda} :A\to B$ is defined as a linear, Gaussian-preserving, and trace-preserving map satisfying the following conditions: 
\begin{eqnarray}
     \begin{aligned}
        \V_{B} &= \T\V_{A}\T\t + \N,\\
        \C &= \V_A\T\t, \label{eq:pc}
     \end{aligned}
     \end{eqnarray}
where $\T$ and $\N\ge\zero$ represent the transformation matrix and noise matrix, respectively, associated with $\overrightarrow{\Lambda}$. 
\end{dfn}

Inverting  Eq.~(\ref{eq:pc}) provides gives a systematic method to determine $\T$ and $\N$, and simultaneously illustrates that they are uniquely determined by $\V_{AB}$;

\begin{thm} Given a space-time CM $\V_{AB} \equiv ( \begin{smallmatrix}
\V_A & \C \\
{\C}\t & \V_B
\end{smallmatrix} )$, we can always find its \emph{unique} forward Gaussian pseudo-channel in the form of $\T,\N$ as follows,
    \begin{eqnarray}
    \begin{aligned}
        \T &= {\C}\t{\V^{-1}_A},\\
        \N &= \V_B - {\C}\t{\V^{-1}_A}\C.
     \end{aligned} 
     \end{eqnarray}   
Similarly, we can always find the unique reverse Gaussian pseudo-channel. \label{thm:cv_retrieve}
\end{thm}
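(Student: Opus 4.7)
The plan is to treat the defining relations of a forward Gaussian pseudo-channel, namely $\C = \V_A\T\t$ and $\V_B = \T\V_A\T\t + \N$, as a linear system in the unknowns $(\T,\N)$ and simply invert it. The local covariance matrix $\V_A$ is a bona fide single-mode covariance matrix, hence positive definite (it satisfies $\V_A + i\bOmega \ge \zero$ and in particular is strictly positive), so $\V_A^{-1}$ exists. This is the single analytic fact the argument rests on.

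First I would take the transpose of $\C = \V_A\T\t$ and left-multiply by $\V_A^{-1}$ to obtain $\T\t = \V_A^{-1}\C$, and therefore $\T = \C\t\V_A^{-1}$. Next I would substitute this expression for $\T$ back into the noise equation:
\begin{equation*}
\N \;=\; \V_B - \T\V_A\T\t \;=\; \V_B - \C\t\V_A^{-1}\V_A\V_A^{-1}\C \;=\; \V_B - \C\t\V_A^{-1}\C.
\end{equation*}
Because each of these two manipulations is a direct algebraic inversion (rather than a choice), the pair $(\T,\N)$ is forced; this gives existence and uniqueness in one stroke. It remains to observe that the map $\overrightarrow{\Lambda}$ defined by this $(\T,\N)$ is linear, Gaussian-preserving, and trace-preserving by construction, matching the three conditions in the definition of a forward Gaussian pseudo-channel (note that the definition does not demand $\N\ge\zero$ together with the CP condition in Eq.~(\ref{eq:con_cp})---those would pick out the temporally-compatible subclass, whereas here we only want the unique candidate channel).

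For the reverse direction I would simply exchange the roles of $A$ and $B$, replacing $\C$ by $\C\t$ and swapping $\V_A \leftrightarrow \V_B$, and repeat the same two-line inversion, using that $\V_B$ is likewise positive definite and hence invertible. This yields the analogous unique reverse transformation and noise matrices.

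I do not anticipate a genuine obstacle: the content of the theorem is essentially that Eq.~(\ref{eq:pc}) is a triangular system whose leading block $\V_A$ is invertible. The only point to double-check is that the derivation goes through without needing any sign or symmetry assumption on $\C$ beyond what Definition~1 already provides, and that we are careful to distinguish the invertibility of $\V_A$ (which is automatic from its being a local quantum covariance matrix) from the separate question of whether the resulting $\N$ is positive semidefinite or satisfies the complete-positivity bound---the latter is precisely what later sections will use to diagnose atemporality.
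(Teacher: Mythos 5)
Your proposal is correct and follows essentially the same route as the paper: the paper's proof likewise notes that $\V_A$ is invertible (via the uncertainty principle $\vari(\hq_A)\vari(\hp_A)\ge 1$ in the standard basis) and then obtains $\T$ and $\N$ by directly inverting the defining relations $\C=\V_A\T\t$ and $\V_B=\T\V_A\T\t+\N$. Your additional remark that positivity of $\N$ and the CP condition are deliberately \emph{not} required here is consistent with how the paper uses the pseudo-channel to diagnose atemporality later.
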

\begin{proof}
    Recall $\vari(\hq_A)\vari(\hp_A)\ge 1$ due to the uncertainty principle, ensuring that $\V_A$ is invertible. Thus, the proof follows from inverting Eq.~(\ref{eq:pc}).
\end{proof}

If the resulting pseudo-channel is also completely positive, such that $\T$ and $\N$ satisfy the CP condition Eq.~(\ref{eq:con_cp}), then $\V_{AB}$ is clearly compatible with a temporal causal mechanism. On the other hand, if no such $\T$ and $\N$ exist, then no physical Gaussian channel from $A$ to $B$ is compatible with $\V_{AB}$. We then say that $\V_{AB}$ is \emph{forward Gaussian-atemporal}. This motivates us to define 
\begin{align}
    \X(\T,\N) \equiv \N+i\bOmega -i \T\bOmega\T\t    
\end{align}
as the \emph{forward atemporality matrix}. The non-positivity of $\X(\T,\N)$ then provides a necessary and sufficient condition for forward Gaussian-atemporality. By reversing $A$ and $B$, we can define the \emph{reverse Gaussian pseudo-channel} that attempts to explain $\V_{AB}$ as a channel from Bob to Alice. The state is \emph{reverse Gaussian-atemporal} should no CP pseudo-channels exist from $B$ to $A$. We then say that $\V_{AB}$ is Gaussian atemporal if it is Gaussian-atemporal in both temporal direction.

\subsection{Atemporality Robustness}

We introduce an operational measure of atemporality by taking inspiration from various robustness measures in quantum resources theories~\cite{chitambar2019quantum,napoli2016robustness}. \emph{Given a spatial-temporal Gaussian state that exhibits atemporality, what is the maximum noise we can add until it becomes compatible with a temporal causal mechanism}? 

Consider first the case of forward atemporality, where atemporality is equivalent to the condition $\X(\T,\N) = \N+i\bOmega -i \T\bOmega\T\t \ngeq \zero$, where $\N$ is the noise matrix. Clearly, if this noisy matrix is sufficiently large, $\X(\T,\N) \geq \zero$. i.e., for any $\X(\T,\N)$, there exists a sufficiently large $\mu$ such that $\X(\T,\N + \mu \I) \geq \zero$. Operationally, this corresponds to adding thermal noise of variance $\mu$ to Bob's measurement statistics. Generalizing thermal noises to any Gaussian noises, this motivates our definition of \emph{forward atemporality robustness}.

\begin{dfn} [Forward Atemporality Robustness] Consider a Gaussian spatial-temporal state on two qumodes with a space-time covariance matrix $\V_{AB}$. Let $\V'_{AB}$ represent the system's resulting covariance matrix should Bob's measurement outcome suffer an additive Gaussian noise $\mathbf{E}\ge\zero$. We define the \emph{forward atemporality robustness}, denoted $\overrightarrow{f}(\V_{AB})$, as the maximal amount of noise such that $\V'_{AB}$ is still atemporal. 
Mathematically,

\begin{align}
    \overrightarrow{f}(\V_{AB}) = \sup \{ \mathcal{N}(\mathbf{E}): \X(\T,\N + \mathbf{E}) < 0\}
\end{align}

where $X(\T,\N)$ is the forward atemporality matrix of $\V_{AB}$ and $\mathcal{N}$ is a measure of noise defined as 
\begin{align}
    \mathcal{N}(\mathbf{E}):=\sqrt{\det \mathbf{E}}.
\end{align}
\end{dfn}

First, observe that this measure is faithful. $\overrightarrow{f}(\V_{AB})= 0$ implies that $\X(\T,\N) \geq 0$, thus, $\V_{AB}$ is compatible with temporal causal explanations. Meanwhile, if $\overrightarrow{f}(\V_{AB}) > 0$, then $\X(\T,\N) < 0$, implying atemporality. Moreover, we illustrate that this measure can be computed analytically (see \cref{app:proofs}), enabling us to prove the following theorem:

\begin{restatable}{thm}{arobust} A spatial-temporal qumode-pair with covariance matrix $\V_{AB}$ and an associated forward pseudo-channel with transformation matrix $\T$ and noise matrix $\N$ has forward atemporality robustness
    \begin{eqnarray}
        \overrightarrow{f}(\V_{AB})=\max(0, \abs{\omega}-\sqrt{\det\N}),
    \end{eqnarray}
    with $\omega = 1-\det\T=1-\frac{\det\C}{\det \V_A}$ and $\det\N=\frac{\det\V_{AB}}{\det \V_A}$. Moreover, $\overrightarrow{f}(\V_{AB}) > 0$ is a necessary and sufficient condition for forward Gaussian atemporality.     
    \label{thm:arobust}
\end{restatable}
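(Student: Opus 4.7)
The plan is to reduce the positive semidefiniteness of $\X(\T,\N+\mu\I)$ to a single scalar inequality in $\mu$ and solve it explicitly. I would first invoke \cref{lem:temporal} and \cref{thm:cv_retrieve} together with the $2\times 2$ identity $\T\bOmega\T\t = (\det\T)\,\bOmega$ (applicable because Bob is a single mode) to collapse the forward atemporality matrix to
\begin{equation*}
\X(\T,\N+\mu\I) = (\N + \mu\I) + i\omega\,\bOmega, \qquad \omega := 1 - \det\T.
\end{equation*}
Substituting $\T = \C\t\V_A^{-1}$ then yields $\omega = 1 - \det\C/\det\V_A$, and the Schur-complement identity $\det\V_{AB} = \det\V_A\cdot\det(\V_B - \C\t\V_A^{-1}\C)$ yields $\det\N = \det\V_{AB}/\det\V_A$, reproducing the scalar quantities in the statement.

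Next, since $\X(\T,\N+\mu\I)$ is a $2\times 2$ Hermitian matrix with trace $\tr\N + 2\mu \ge 0$ for $\mu \ge 0$, its positive semidefiniteness is equivalent to the non-negativity of its determinant. A direct expansion gives $\det\X = \det(\N+\mu\I) - \omega^2$, reducing the problem to finding the smallest $\mu\ge 0$ for which $\det(\N+\mu\I)\ge \omega^2$. I would then use Williamson's theorem to bring $\N$ on Bob's mode into its symplectic-eigenvalue normal form $\sqrt{\det\N}\,\I$, noting that this passive local symplectic leaves $\omega$ and the relevant determinants invariant. The inequality simplifies to $(\sqrt{\det\N}+\mu)^2 \ge \omega^2$, whose smallest non-negative solution is $\max(0,\,|\omega|-\sqrt{\det\N})$, yielding the claimed formula.

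Finally, $\overrightarrow{f}(\V_{AB})>0$ iff $\sqrt{\det\N}<|\omega|$ iff $\det\N<\omega^2$, which is precisely the failure of the single-mode CP criterion $\X(\T,\N)\ge 0$; equivalently, no completely positive forward Gaussian pseudo-channel compatible with $\V_{AB}$ exists, so $\V_{AB}$ is forward Gaussian-atemporal. The main technical obstacle is the Williamson reduction step: I would need to justify carefully that symplectic conjugation of Bob's mode acts as a passive relabeling of his standard basis for which the minimisation in $\mu$ depends on $\N$ only through the symplectic invariant $\sqrt{\det\N}$ (together with $\omega$), so that the isotropic noise injection $\mu\I$ can be analysed in the Williamson frame without loss. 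Once this invariance is settled, the remaining computation is a one-variable quadratic and the proof concludes immediately.
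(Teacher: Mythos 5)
Your structural reduction is correct and matches the paper's: writing $\X(\T,\N+\mu\I)=(\N+\mu\I)+i\omega\bOmega$ with $\omega=1-\det\T=1-\det\C/\det\V_A$, testing positivity of a $2\times2$ Hermitian matrix with nonnegative trace via its determinant, and obtaining $\det\N=\det\V_{AB}/\det\V_A$ from the Schur complement are exactly the ingredients used in \cref{app:proofs}. The gap is the Williamson step, and it is not a technicality you can patch: the symplectic $\mathbf{S}$ with $\mathbf{S}\N\mathbf{S}\t=\sqrt{\det\N}\,\I$ is in general an active (squeezing) transformation rather than an orthogonal one, so although it preserves $\bOmega$ and determinants, it does not preserve the isotropic form of the injected noise: $\mu\I\mapsto\mu\,\mathbf{S}\mathbf{S}\t\neq\mu\I$. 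Concretely, $\det(\N+\mu\I)=(n_1+\mu)(n_2+\mu)$ with $n_1,n_2$ the \emph{ordinary} eigenvalues of $\N$, which depends on $\tr\N$ and not only on the symplectic invariant $\sqrt{\det\N}$. The smallest $\mu\ge0$ with $(n_1+\mu)(n_2+\mu)\ge\omega^2$ is $\max\bigl(0,\tfrac{1}{2}\bigl(\sqrt{(n_1-n_2)^2+4\omega^2}-(n_1+n_2)\bigr)\bigr)$, which is strictly smaller than $\abs{\omega}-\sqrt{n_1n_2}$ whenever $n_1\neq n_2$ and the quantity is positive (e.g.\ $n_1=1/2$, $n_2=2$, $\omega=2$ gives $\approx 0.886$ versus $1$). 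So the invariance you flag as the "main technical obstacle" is in fact false, and the one-variable quadratic computes the wrong quantity for anisotropic $\N$.

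The paper avoids this by changing the optimization rather than the frame: it allows an arbitrary added noise matrix $\mathbf{E}\ge\zero$, quantifies it by $\sqrt{\det\mathbf{E}}$, and solves $\max\sqrt{\det\mathbf{E}}$ subject to $\det(\N+\mathbf{E})=\omega^2$ by Lagrange multipliers, first for diagonal $\mathbf{E}$ and then showing the off-diagonal entry must vanish at the optimum. The extremal noise satisfies $\epsilon_i=\sqrt{n_i/n_j}\,(\abs{\omega}-\sqrt{n_1n_2})$, i.e.\ it is proportional to $\N$ rather than to $\I$ --- which is precisely why no isotropic-$\mu$ analysis can reproduce the stated formula. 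To rescue your route you must either restrict to isotropic $\N$ (where your argument is fine and the two answers coincide) or adopt the paper's $\sqrt{\det}$ noise measure from the outset. Your closing equivalence $\overrightarrow{f}(\V_{AB})>0\Leftrightarrow\det\N<\omega^2\Leftrightarrow\X(\T,\N)\ngeq\zero$ is correct and is exactly the paper's faithfulness argument.
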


By exchanging $A$ and $B$, we can also define a \emph{reverse \gar} $\overleftarrow{f}$. We can then define general \emph{\gar} by taking the minimum of these two, i.e.,
\begin{eqnarray}
    f(\V_{AB}) \equiv \min\{\overrightarrow{f}(\V_{AB}),\overleftarrow{f}(\V_{AB}) \}.
\end{eqnarray}    
$f(\V_{AB}) > 0$ then provides a necessary and sufficient condition for Gaussian atemporarily.

\subsection{Entanglement and Atemporality} 
Consider first the class of spatially-compatible spatial-temporal two-mode Gaussian states, i.e., those that a bipartite density operator can represent. In this context, entanglement has always been the iconic measure of non-classicality. Meanwhile, the idiom that correlations do not imply causation from classical statistics suggests that highly classical states will likely be compatible with spatial and temporal causal mechanisms and thus have zero atemporality robustness.

\begin{figure}[tb]
    \centering
    \includegraphics[scale=0.5]{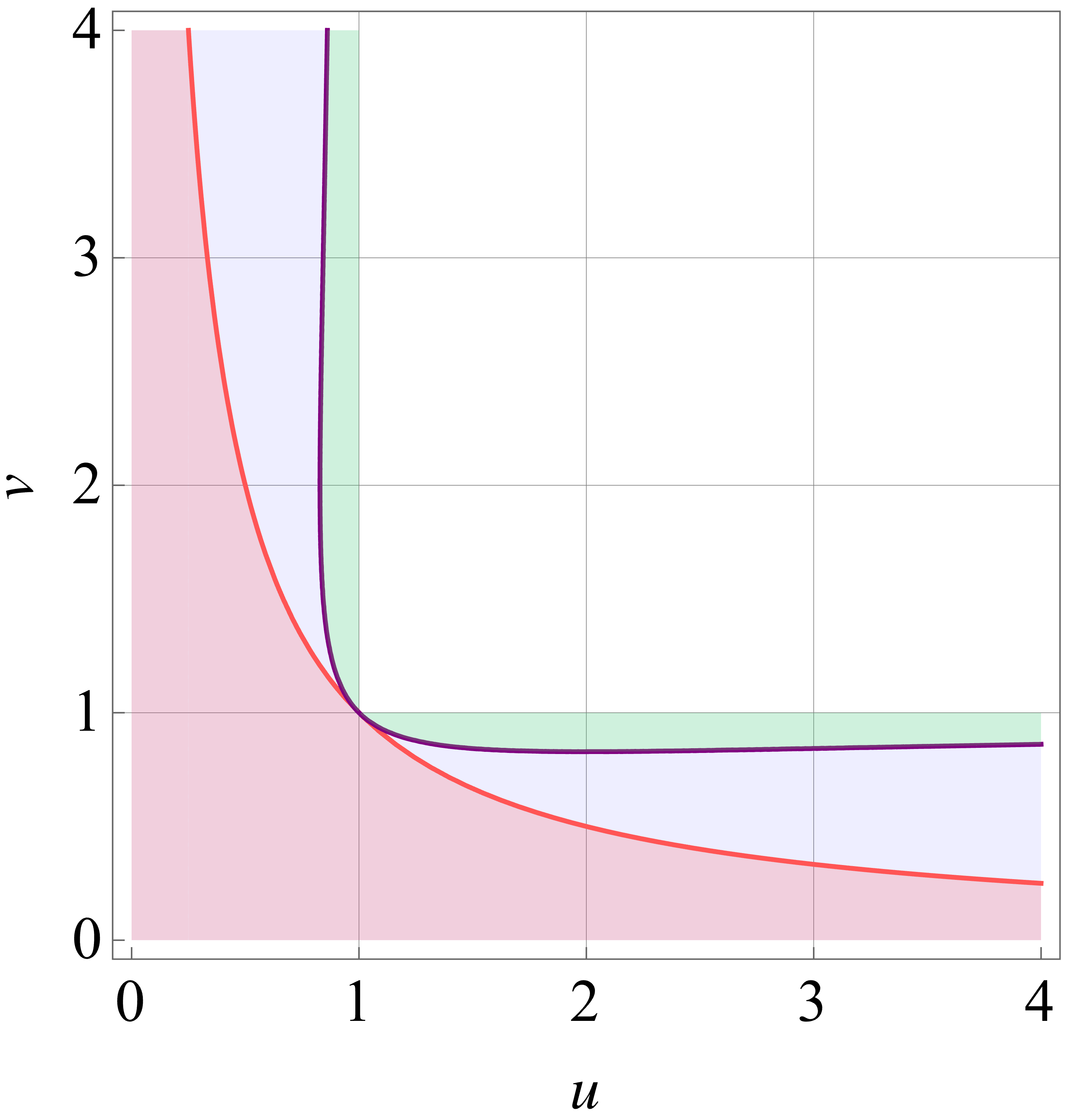}
    \caption{ \textbf{Entanglement and atemporality for asymmetric states passing through a balanced beam splitter.} The red-shaded region represents the set of all unphysical states. The regions where $u<1$ or $v<1$ represent the entangled states. States in the blue region are entangled and Gaussian-atemporal, whereas the states in the green region are entangled but not Gaussian-atemporal. The states in the remaining white region are separable and not Gaussian-atemporal.
    }
    \label{fig:ex_sym}
\end{figure}

We investigate these conjectures through examples. Consider two thermal states with variance $v\ge1$ labelled by $A$ and $B$, passing through a two-mode squeezing operation with a squeezing parameter $r\ge0$. The CM of the resulting Gaussian state is of the form 
    \begin{eqnarray*}
    \V_{AB}=
    \begin{pmatrix}
    v\cosh{2r} & 0 & v\sinh{2r} & 0\\
    0 & v\cosh{2r} & 0 & -v\sinh{2r}\\
    v\sinh{2r} & 0 & v\cosh{2r} & 0\\
    0 & -v\sinh{2r} & 0 & v\cosh{2r}
    \end{pmatrix}.
    \end{eqnarray*}    
   Such state are entangled if and only if  $r>r_{ent}:=\frac{1}{2}\ln v$, following Ref. \cite{peres96ppt, horodecki96ppt, werner01bound, serafini03symplectic}. Meanwhile, we show in \cref{app:examples} that these states are Gaussian-atemporal if and only if
    $ r > r_{atemp}:= \frac{1}{2}\arccos{\frac{v+\sqrt{v^2+8}}{4}} > r_{ent}$. Therefore, while all Gaussian-atemporal states are entangled here, states for which $r_{ent} < r \le r_{atemp}$ are entangled and yet can still be explained temporally by a Gaussian channel between $A$ and $B$ (or vice versa, given its symmetry). However, when entanglement becomes sufficiently large (exceeding  $E_{max}:\approx 0.1882$ as measured by logarithmic negativity, which was introduced in~\cite{vidal02negativity}, and later proved to be an entanglement monotone in~\cite{plenio2005logarithmic}), Gaussian-atemporality is guaranteed \footnote{We use the natural logarithm for the logarithmic negativity throughout the paper.}.

\begin{figure*}[tbh]
\centering
        \subfloat[Pure states\label{subfig:pure}]{\includegraphics[width=0.35\textwidth]{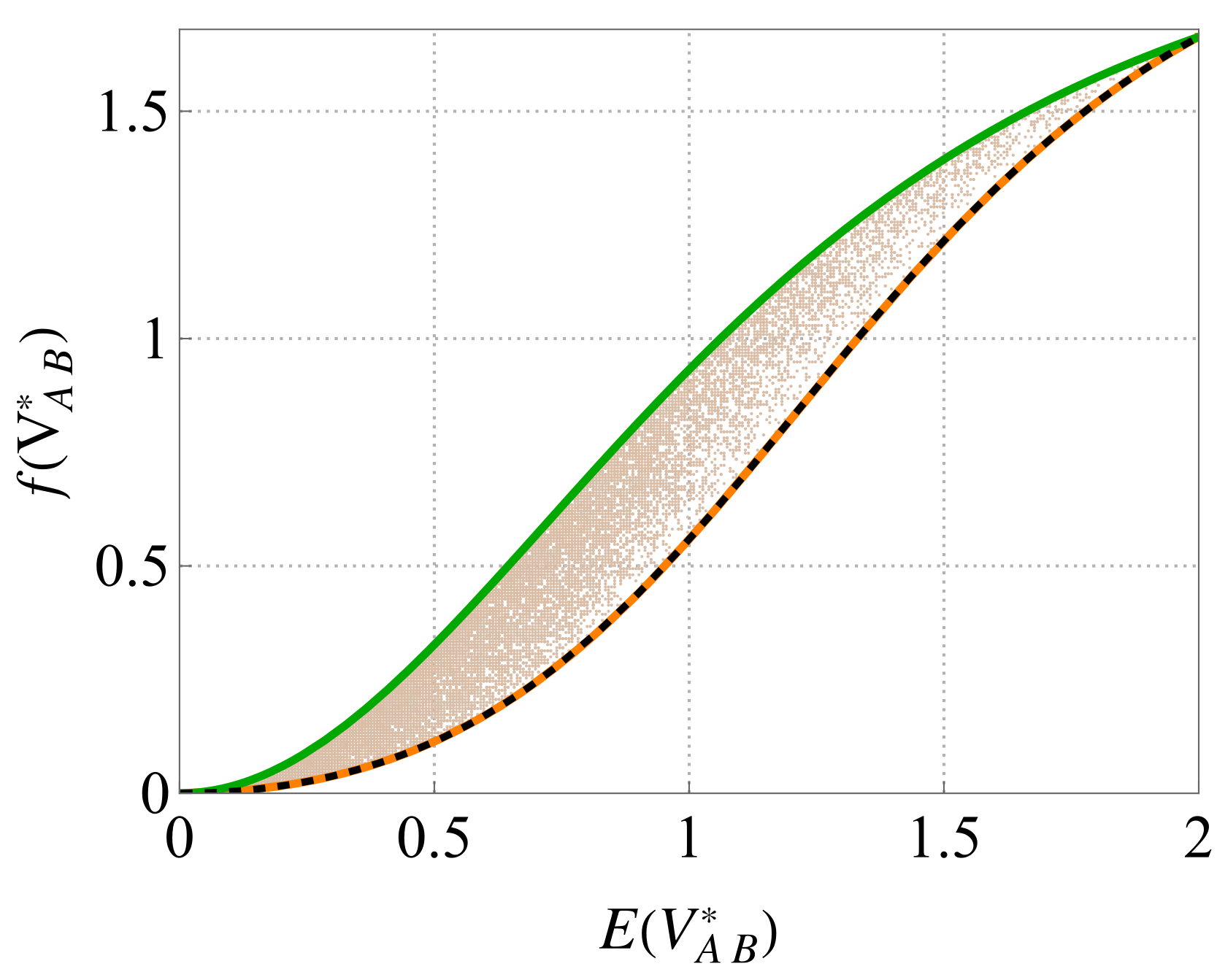}}
        \subfloat[Mixed states\label{subfig:mix}]{\includegraphics[width=0.35\textwidth]{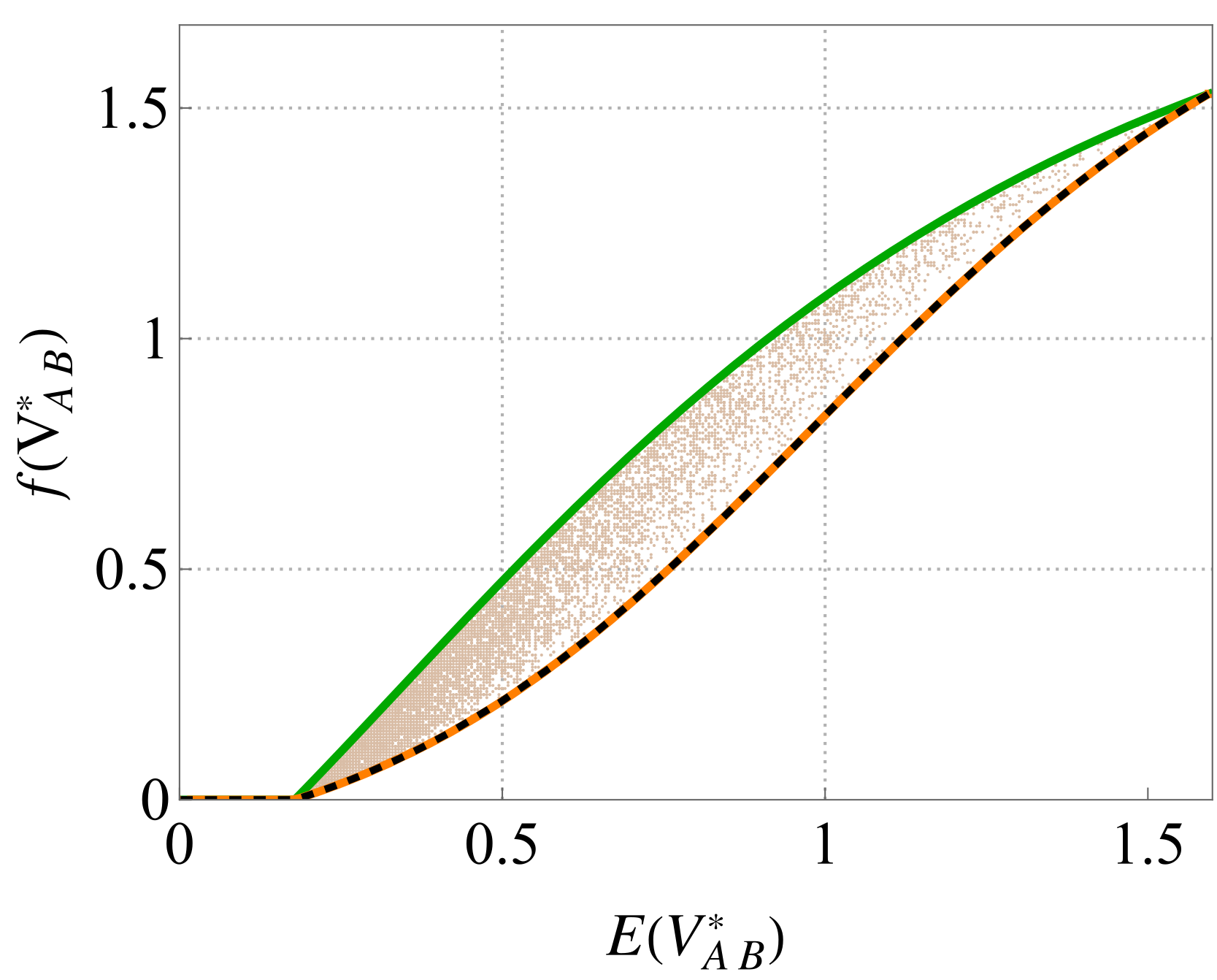}}
    \caption{\textbf{Entanglement and atemporality for randomly generated Gaussian states}. Consider randomly generated two-mode squeezed states. We use the logarithmic negativity~\cite{vidal02negativity,plenio2005logarithmic} as a measure of entanglement. (a) Plot of Gaussian-atemporality vs entanglement for $5000$ randomly generated pure state (with a squeezing $0 < r <1$). The top green line is the line for a two-mode squeezed vacuum state (with $0<r<1$ and $v=1$ ). The black dashed line is for a state we get from interfering two squeezed states (with $r=1$) while rotating the other. We also get the same behaviour (orange line) by interfering the two squeezed states on an asymmetric beam splitter. (b) Plot of atemporality vs entanglement for $20000$ randomly generated mixed states with the same level of mixedness. There is a finite value of the logarithmic negativity needed before we can observe any Gaussian-atemporality. The top green line is for a two-mode squeezed thermal state of $v=1.5$. The black dashed line is for a state that we get from interfering two squeezed thermal states while rotating one state relative to the other and the orange line is from interfering two squeezed thermal states at an asymmetric beam splitter.
    } 
    \label{fig:ent}
\end{figure*}

As a second example, consider a class of asymmetric states that described, two Gaussian states $A,B$, initially with local covariances $\V_A= (\begin{smallmatrix}
    u & 0\\
    0 & v
\end{smallmatrix})$, $\V_B= (\begin{smallmatrix}
    v & 0\\
    0 & u
\end{smallmatrix})$ that interact via a balanced beam splitter, where $uv\ge1$. Then, the resulting state has CM
\begin{eqnarray*}
    \V_{AB} = \frac{1}{2}\begin{pmatrix}
    u+v & 0 & -(u-v) & 0\\
    0 & u+v & 0 & u-v\\
    -(u-v) & 0 & u+v & 0\\
    0 & u-v & 0 & u+v
\end{pmatrix}.
\end{eqnarray*} 
The positive partial transpose (PPT) criterion \cite{peres96ppt, horodecki96ppt} tells us that the state is entangled if and only if $\min(u,v)<1$, i.e., our initial Gaussian states are squeezed and exhibit sub-vacuum noise. By using \gar, we find in \cref{app:examples} that it is Gaussian-atemporal if and only if 
\begin{eqnarray*}
    \frac{1-u}{u^2}>\frac{v-1}{v^2}.
\end{eqnarray*}
\cref{fig:ex_sym} then illustrates the relation between \gat\ and entanglement for various values of $u$ and $v$. In this case, all separable states are temporally compatible, and all atemporal states are entangled. This is indeed true in the case of qubit systems \cite{song2024class,song2025bipartite}. However, there is a region (shaded blue) that is entangled but still exhibits zero atemporality. These lead us to the following observation:

\begin{obs}
     There are entangled but temporally compatible Gaussian states, but sufficiently strong entanglement implies \gat.
\end{obs}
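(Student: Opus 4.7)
The plan is to split Observation~1 into its two component claims and prove each by exploiting the closed form provided by \cref{thm:arobust}. The first claim (existence of entangled, temporally compatible states) will follow by exhibition; the second (sufficiently strong entanglement forces atemporality) will follow from a constrained optimization whose extremizer should turn out to be a symmetric two-mode squeezed thermal state.

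For the first claim, I would invoke the symmetric two-mode squeezed thermal example already analyzed in the text. Whenever $v > 1$, \cref{app:examples} establishes the strict ordering $r_{ent} < r_{atemp}$, so any squeezing in the open interval $(r_{ent}, r_{atemp})$ produces an entangled Gaussian state whose \gar\ vanishes. This constructively establishes the first half of the observation.

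For the second claim, the approach is to maximize the logarithmic negativity $E_N$ over all two-mode covariance matrices that are simultaneously physical and temporally compatible in both directions, i.e.\ that satisfy $\V_{AB} + i\bOmega_2 \ge \zero$, $\overrightarrow{f}(\V_{AB}) = 0$, and $\overleftarrow{f}(\V_{AB}) = 0$. Since \cref{thm:arobust} expresses both robustnesses as explicit functions of the four invariants $\det\V_A$, $\det\V_B$, $\det\C$, and $\det\V_{AB}$, and since in standard form $E_N = \max\{0,-\ln \tilde{\nu}_-\}$ with $\tilde{\nu}_-$ determined by $\tilde{\Delta} = \det\V_A + \det\V_B - 2\det\C$ and $\det\V_{AB}$ alone, the problem collapses to an optimization in these four invariants. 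Its supremum $E_{\max}$ is the desired threshold, and any state with $E_N > E_{\max}$ must violate at least one of the two $f$-constraints and therefore be Gaussian-atemporal.

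To carry out the optimization, I would exploit the manifest $A \leftrightarrow B$ symmetry of both the constraint set and the objective $E_N$, arguing that the supremum is attained on a state with $\det\V_A = \det\V_B$. Once this reduction is made, the feasible region collapses onto the one-parameter family of symmetric two-mode squeezed thermal states, and a one-variable maximization in $v$ yields the explicit $E_{\max} \approx 0.1882$ noted in the text. The main obstacle is making the symmetric reduction rigorous: a priori the feasible region is four-dimensional, and ruling out asymmetric extremal points requires either an explicit symmetrization argument (producing from any feasible $\V_{AB}$ a symmetric one with $E_N$ no smaller) or a KKT analysis along the active boundary of the atemporality cone. Numerical sampling as in \cref{fig:ent} strongly supports the reduction, but the analytic justification appears to be the technical crux.
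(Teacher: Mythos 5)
Your first claim is handled exactly as the paper handles it: the symmetric two-mode squeezed thermal family of Example~1 with $r_{ent}<r\le r_{atemp}$ furnishes the entangled, temporally compatible states, so that part is fine.

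For the second claim there is a genuine gap, and it is worth being precise about what the paper actually proves versus what you are proposing. The paper does \emph{not} optimize over all two-mode Gaussian states. Its justification of ``sufficiently strong entanglement implies \gat'' is entirely within the family of Example~1: on the temporally compatible region $v\ge\cosh 4r/\cosh 2r$ it bounds $E(\V_{AB})=2r-\ln v\le 2r-\ln(\cosh 4r/\cosh 2r)$ and maximizes this single-variable function (maximum $\approx 0.1882$ at $r\approx 0.2203$), with Example~2 and the numerics of \cref{fig:ent} as further support. Your proposal upgrades this to a universal threshold over the full four-invariant feasible set, which is a strictly stronger statement, and the step you yourself flag --- reducing to $\det\V_A=\det\V_B$ --- is the entire difficulty, not a technicality. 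Note that the constraint set is not manifestly symmetric in a useful way: $\overrightarrow{f}$ depends on $\det\V_A$ but not $\det\V_B$ and vice versa for $\overleftarrow{f}$, and \cref{fig:ent} explicitly shows asymmetric families (rotated and asymmetric-beam-splitter states) lying \emph{below} the symmetric curve, so it is precisely the lower envelope of the atemporality-vs-entanglement region --- the one relevant to your optimization --- whose extremizer is in question. A symmetrization map that preserves feasibility while not decreasing $E_N$ is not supplied, and a KKT analysis on the boundary of the atemporality cone is sketched but not carried out. It is encouraging (and you could use it as a consistency check) that the asymmetric family of Example~2 independently saturates the same value $\ln\bigl((1+\sqrt{2})/2\bigr)\approx 0.1882$ at $u=2\sqrt{2}-2$, $v=2$, but that is evidence for your conjectured reduction, not a proof of it. As written, your argument establishes the observation only if one first grants the symmetric reduction; absent that, you should fall back to the paper's weaker, example-based route, for which the within-family calculus computation you already cite is sufficient.
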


This observation tells us that entanglement and atemporality are distinct measures of non-classicality, with the latter likely being a stronger notion than the former; and mirrors similar observations for spatiotemporal states of two qubits. Indeed, Fig \ref{fig:ent} reaffirms this observation by visualising the relation between \gat\ and entanglement of randomly generated two-mode Gaussian states. From \cref{fig:ent}, \gat\ appears to imply entanglement. The converse, on the other hand is not generally true. There exist entangled and temporal Gaussian states. 

\subsection{Time-reversal Asymmetry of Atemporality}
When we consider more general spatio-temporal two-mode states, the property of time-reversal symmetry is of natural interest. In classical statistics, we know it is impossible to decide conclusively whether $A$ caused $B$ or $B$ caused $A$. Yet, studies on qubits have shown this is not necessarily true in the quantum regime. Gaussian-atemporality enables us to demonstrate this phenomenon in Gaussian systems.

Consider the scenario where Alice begins with a general Gaussian state whose covariance matrix has quadrature variances $\langle \hat{q}^2 \rangle = v_1$ and $\langle \hat{p}^2 \rangle = v_2$ when in standard form. Alice then sends the state to Bob via a loss channel with transmission rate $\eta$ (i.e., the associated channel has transformation matrix $\T=\sqrt{\eta}\I$ and noise matrix $\N = (1-\eta)\I$). The associated covariance matrix between Alice and Bob is then given by:
    \begin{eqnarray*}
        \V_{AB} = \begin{pmatrix}
    v_1 & 0 & v_1\sqrt{\eta} & 0\\
    0 & v_2 & 0 & v_2\sqrt{\eta}\\
    v_1\sqrt{\eta} & 0 & 1+(v_1-1)\eta & 0\\
    0 & v_2\sqrt{\eta} & 0 & 1+(v_2-1)\eta
\end{pmatrix}.    
    \end{eqnarray*}
    Clearly this state has zero forward atemporality, since it can be explained by a physical channel from Alice to Bob by definition. However, a causal explanation of this correlations as a channel from Bob to Alice is not always possible.

    In \cref{app:examples}, we reverse the role of $A$ and $B$ and use \cref{thm:cv_retrieve} to find the reverse atemporality robustness. Introducing $$v^{(\eta)}_k = \frac{v_k}{1+(v_k-1)\eta}$$ as the rescaled quadrature variances, we find
    \begin{eqnarray*}
        \overleftarrow{f}(\V_{AB}) &=& \max(0, (\eta v+1)(1-v)),
    \end{eqnarray*}
    where $v = \sqrt{v^{(\eta)}_1v^{(\eta)}_2}$. Noting that $v > 0$, $\overleftarrow{f}(\V_{AB})$ is then non-zero if and only if $v^{(\eta)}_1v^{(\eta)}_2 < 1$. From this, we observe the following:
    \begin{itemize}
        \item When there is no loss ($\eta = 1$), $v^{(\eta)}_k = 1$ for each $k$ and thus $v = 1$. This aligns with our expectation that any unitary channel has a valid time-reversal.
        \item When all input is lost ($\eta = 0$),  $v^{(\eta)}_k \rightarrow v_k$. Thus, zero reverse atemporality is guaranteed by the uncertainty principle, which implies $v_1v_2 \geq 1$.
        \item Whenever the initial state is classical (i.e., exhibits no squeezing such that $v_1, v_2 \geq 1$), then $v^{(\eta)}_1, v^{(\eta)}_2 \geq 1$ and thus $v \geq 1$ for any $\eta$. This aligns with our expectation that atemporality is a quantum phenomenon.
    \end{itemize}

    However, if our input is non-classical (i.e. squeezed), we can obtain non-zero reverse atemporality in moderate loss regimes. Fig.~\ref{fig:cv_ex1} illustrates this for various values of $\eta$. 
    We see that in regimes where the input is squeezed, there exist situations where the resulting spatio-temporal state can only be explained by direct cause from $A$ to $B$, but admits no similar explanation from $B$ to $A$. We thus conclude:

\begin{obs}
    \gat\ is asymmetric under time reversal. There exist situations where we can decide from correlation matrices alone whether $A$ caused $B$ or $B$ caused $A$.
\end{obs}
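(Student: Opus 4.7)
The plan is to prove the observation by exhibiting a concrete one-parameter family of spatio-temporal covariance matrices $\V_{AB}$ for which the forward and reverse atemporality robustnesses provably disagree; the existence of such a family establishes both assertions of the observation at once.

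First, I would construct $\V_{AB}$ by fixing a definite forward causal mechanism so that $\overrightarrow{f}(\V_{AB}) = 0$ is automatic. The cleanest choice is a single-mode Gaussian input on $A$ with diagonal covariance $\V_A = \mathrm{diag}(v_1,v_2)$, sent to $B$ via a pure-loss channel of transmissivity $\eta$, i.e., $\T = \sqrt{\eta}\,\I$ and $\N = (1-\eta)\I$. Applying \cref{lem:temporal} together with Eq.~\eqref{eq:tn} yields a closed-form $\V_{AB}$ in which $\V_B$ and $\C$ are both diagonal. Because $\V_{AB}$ is realized by a bona fide Gaussian CP channel by construction, $\overrightarrow{f}(\V_{AB}) = 0$ follows immediately from \cref{thm:arobust}.

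Next, I would compute $\overleftarrow{f}(\V_{AB})$ by swapping the roles of $A$ and $B$ in \cref{thm:cv_retrieve} and \cref{thm:arobust}. Explicitly, I would form the reverse transformation matrix $\T' = \C\,\V_B^{-1}$ and reverse noise matrix $\N' = \V_A - \C\,\V_B^{-1}\C\t$, then substitute into the closed-form expression of \cref{thm:arobust}. The diagonal structure collapses everything to scalar expressions in the rescaled variances $v_k^{(\eta)} := v_k / (1 + (v_k-1)\eta)$, and I expect to land on a clean condition of the form $v_1^{(\eta)} v_2^{(\eta)} < 1$ for strict positivity of $\overleftarrow{f}(\V_{AB})$.

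The main obstacle is exhibiting parameters that satisfy this inequality while keeping $\V_A$ physical (i.e., $v_1 v_2 \geq 1$). An initial pure squeezed state with $v_1 v_2 = 1$ and $v_1 < 1 < v_2$, together with any moderate loss $\eta \in (0,1)$, should suffice: a short computation gives $(1+(v_1-1)\eta)(1+(v_2-1)\eta) = 1 + (v_1+v_2-2)\eta(1-\eta) > 1$ by AM--GM, so $v_1^{(\eta)} v_2^{(\eta)} < 1$ is maintained. For any such triple, $\overrightarrow{f}(\V_{AB}) = 0 < \overleftarrow{f}(\V_{AB})$, witnessing both the asymmetry of $f$ under time reversal and the operational claim that correlations alone can rule out $B \to A$ while admitting $A \to B$.
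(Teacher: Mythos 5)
Your proposal is correct and follows essentially the same route as the paper's own argument (Example~3 in \cref{app:examples}): a diagonal input $\V_A=\mathrm{diag}(v_1,v_2)$ through a loss channel $\T=\sqrt{\eta}\,\I$, $\N=(1-\eta)\I$, with the reverse pseudo-channel computed via \cref{thm:cv_retrieve} and \cref{thm:arobust} to obtain the criterion $v_1^{(\eta)}v_2^{(\eta)}<1$. Your explicit witness (a pure squeezed input with $v_1v_2=1$, $v_1<1<v_2$, and the AM--GM check that $(1+(v_1-1)\eta)(1+(v_2-1)\eta)>1$) is a valid and slightly more self-contained way of confirming the nonempty parameter region the paper displays in \cref{fig:cv_ex1}.
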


    \begin{figure}[ht]
    \centering
    \includegraphics[width=0.7\linewidth]{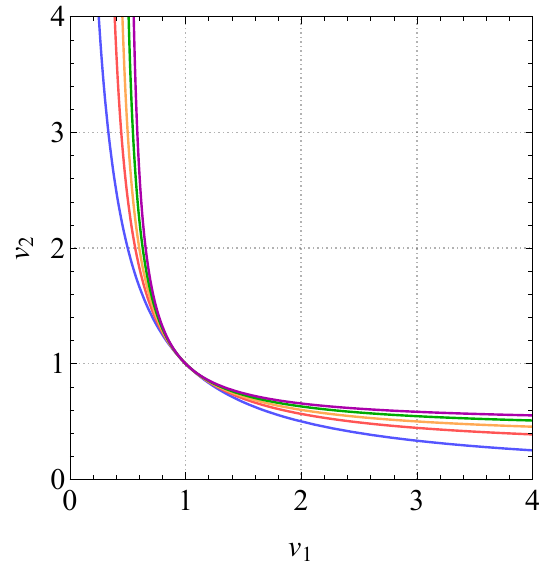}
    \caption{\textbf{Time-reversal asymmetry of atemporality.} The states constructed are parameterized by  $v_1,v_2$. By construction, these states are not forward Gaussian-atemporal. The region above the blue line represents physical space-time Gaussian states. The region below the red (yellow, green, purple) line  represents the ones that have non-zero reverse Gaussian-atemporalites, when $\eta=0.3$ ($\eta=0.5, 0.7, 0.9$, respectively). The region between these lines thus includes physical space-time Gaussian states that are not forward Gaussian-atemporal, but are reverse Gaussian-atemporal. }
    \label{fig:cv_ex1}
\end{figure}

\section{Discussion}
Given two random variables drawn from some multi-variable Gaussian distribution, Gaussian-atemporality asks: can we rule out the resulting correlations between them from being generated purely by temporal mechanisms, such that they correspond to two measurements on the same physical system? Here, we answer this in the affirmative, illustrating scenarios where the covariance matrix between quadrature measurements could not arise if performed on the same quantum harmonic oscillator at two different times, regardless of what Gaussian channel occurs in between. We introduced \gar -- the maximal noise before breaking \gat -- as an operationally meaningful measure of \gat, and illustrated its expression in closed analytical form. Use of these tools then provides two immediate observations: (i) \gat\ provides a distinct notion of quantum correlations beyond entanglement and (ii) that certain covariances correlating quadrature measurements of Gaussian systems $A$ and $B$ have a natural arrow of time - they can be explained when $A$ comes before $B$ but not vice versa.

Our work here complements recent studies of spatiotemporal correlations on qubits, where the use of pseudo-density operators enables the discovery that specific correlations are temporal and may possess an intrinsic arrow of time. Our work thus lends further credence that such phenomena extend well beyond quantum systems of few dimensions. Meanwhile, pseudo-density operators have recently motivated spatiotemporal generalizations of quantum mutual information~\cite{wu2025mutual} and teleportation~\cite{marletto21temporaltele} and seen extensions to multi-time states \cite{jia2024double,liu2024unification} or quantum state over time~\cite{leifer13causalbayesian,horsman17pdoandothers,fullwood22onquantumstates,lie2024unique,lie2024uniquemulti,lie2025interferometry}. Similarly, pseudo-channels have conceptual similarity with Bayesian retrodiction maps or Bayesian inverse \cite{aw2021retrodiction,parzygnat22binverse,parzygnat2023axioms} which include Petz recovery maps~\cite{petz1986sufficient,petz1988sufficiency}. There would certainly be value in extending these ideas to the continuous-variable Gaussian domain. Indeed, ever since the initial proposal of the EPR paradox, Gaussian quantum states have been a natural testbed for uniquely quantum phenomena - thanks to their concise covariance matrix description and relative ease of experimental realization. Moreover, the Gaussian continuous variables enjoy the advantage over discrete variables of having an efficiently computable measure of atemporality. Thus, we anticipate they can provide a compelling pathway towards understanding general multi-partite spatiotemporal quantum correlations.

\section*{Acknowledgements} 
This project began while MS was in Nanyang Quantum Hub, School of Physical and Mathematical Sciences, Nanyang Technological University. This work is supported by the National Research Foundation of Singapore through the NRF Investigatorship Program (Award No. NRF-NRFI09-0010), the National Quantum Office, hosted in A*STAR, under its Centre for Quantum Technologies Funding Initiative
(S24Q2d0009), grant FQXi-RFP-IPW-1903 (``Are quantum agents more energetically efficient at making predictions?") from the Foundational Questions Institute (FQxI) and Fetzer Franklin Fund (a donor-advised fund
of Silicon Valley Community Foundation), the Singapore Ministry of Education Tier 1 Grant RG77/22 and the Australian Research Council Centre of Excellence for Quantum Computation and Communication Technology (Project No. CE110001027). 
 
\appendix \label{app}

\section{Proofs of technical results.} \label{app:proofs}

Recall that measurement statistics are given in the standard basis $\ax^\star$ such that the mean vectors are zero and local covariance matrices (CMs) are diagonal. Henceforth, we assume the standard basis unless otherwise specified. For brevity, we will often omit $^\star$ symbol. We are concerned with Gaussian channels with no displacement effect throughout the paper, thus we only need two parameters $\T$ and $\N$ to characterize a Gaussian channel. 

\setcounter{lem}{0}
\begin{lem}
    Consider a temporal causal structure where an initial Gaussian state, characterized by a CM $\V_A$ evolves in time through a Gaussian channel $\G:A\to B$. Let $\T$ be a transformation matrix associated with $\G$. Then, the cross-correlation matrix $\C$ of the space-time CM,  i.e., $\V_{AB}\equiv ( \begin{smallmatrix}
\V_A & \C \\
{\C}\t & \V_B
\end{smallmatrix} )$, is given by 
     \begin{eqnarray}
        \C = \V_A\T\t. 
     \end{eqnarray} 
\end{lem}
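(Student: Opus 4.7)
The plan is to lift the temporal scenario to a Stinespring dilation of $\G$, where every relevant quadrature lives on a single Hilbert space and the cross-correlation can be evaluated as an ordinary symmetric-ordered expectation. Any Gaussian channel with transformation matrix $\T$ and noise matrix $\N$ admits a Gaussian unitary dilation on $A$ together with an ancilla $E$ prepared in a zero-mean Gaussian state $\rho_E$ of covariance $\N$; in the Heisenberg picture this sends Bob's output quadrature vector to $\hat{\mathbf{x}}_B^{\mathrm{out}}=\T\hat{\mathbf{x}}_A+\hat{\mathbf{n}}$, where $\hat{\mathbf{n}}$ is a linear combination of ancilla quadratures that commutes with every component of $\hat{\mathbf{x}}_A$.

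The second step is to identify the operational definition of $C_{jk}$ from the excerpt with the symmetric operator expectation
\begin{align*}
    C_{jk}=\frac{1}{2}\left\langle\hat{x}_{A,j}\,\hat{x}_{B,k}^{\mathrm{out}}+\hat{x}_{B,k}^{\mathrm{out}}\,\hat{x}_{A,j}\right\rangle_{\rho_A\otimes\rho_E}
\end{align*}
on the dilated initial state. This is the main obstacle, since it requires collapsing the two-time procedure (QND measurement on $A$, then channel, then measurement on $B$) into a single joint expectation. The quickest self-contained justification proceeds by direct Gaussian conditioning: in the standard basis the conditional mean of $\hat{x}_{A,l}$ given outcome $x_A$ of a QND measurement of $\hat{x}_{A,j}$ is $(V_{A,lj}/V_{A,jj})\,x_A$; the Gaussian channel transports first moments linearly via $\T$; and averaging $x_A$ against Bob's resulting conditional mean $\sum_l T_{kl} V_{A,lj}/V_{A,jj}\,x_A$ with respect to the outcome distribution (which has variance $V_{A,jj}$) reproduces the symmetric expectation above.

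Granting this identification, the computation is immediate. Substituting $\hat{x}_{B,k}^{\mathrm{out}}=\sum_l T_{kl}\hat{x}_{A,l}+\hat{n}_k$ and using linearity of the symmetric product,
\begin{align*}
    C_{jk}&=\sum_l T_{kl}\,\frac{1}{2}\left\langle\hat{x}_{A,j}\hat{x}_{A,l}+\hat{x}_{A,l}\hat{x}_{A,j}\right\rangle_{\rho_A}\\
    &\quad+\frac{1}{2}\left\langle\hat{x}_{A,j}\hat{n}_k+\hat{n}_k\hat{x}_{A,j}\right\rangle_{\rho_A\otimes\rho_E}.
\end{align*}
The first term is $\sum_l T_{kl} V_{A,jl}=(\T\V_A)_{kj}=(\V_A\T\t)_{jk}$ by symmetry of $\V_A$. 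The second vanishes because $\hat{n}_k$ commutes with $\hat{x}_{A,j}$ (they act on different tensor factors) and both have zero mean on the product state $\rho_A\otimes\rho_E$ in the standard basis. Hence $\C=\V_A\T\t$, as claimed.
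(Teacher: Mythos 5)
Your proposal is correct and, at its computational core, follows the same route as the paper: both pass to the Heisenberg picture of a dilation of $\G$, write Bob's output quadratures as $\T\ax_A$ plus a zero-mean ancilla contribution commuting with the $A$ quadratures, and observe that the noise term drops out of the cross-correlation while the surviving term gives $(\T\V_A)\t=\V_A\T\t$.

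The one place where you substitute a heuristic for the actual argument is the step you yourself flag as the main obstacle: identifying the two-time measured correlation with the symmetric-ordered expectation. Your justification invokes the \emph{classical} Gaussian conditioning formula, asserting that after a projective measurement of $\hx_{A,j}$ with outcome $x_A$ the conditional mean of $\hx_{A,l}$ is $(V_{A,lj}/V_{A,jj})\,x_A$. That is not what quantum collapse does: the post-measurement state is an eigenstate of $\hx_{A,j}$, so the conjugate quadrature is left completely undetermined (formally a flat, zero-mean distribution), not Bayesian-updated toward $(V_{A,lj}/V_{A,jj})\,x_A$. The two prescriptions agree only because the lemma is posed in the standard basis, where $V_{A,lj}=0$ for $l\neq j$, so both give zero; in a non-standard basis they would disagree, and it is the collapse, not classical conditioning, that governs the measured statistics. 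The paper does this step explicitly with spectral projectors: it shows that $\int\dd{x_A}x_A\tr[\widehat{\Pi}^{(i)}_{x_A}\rho_A\widehat{\Pi}^{(i)}_{x_A}\hx_{A,k}]$ equals $\expval{\hx_{A,i}^2}$ for $k=i$ and vanishes for $k\neq i$, using $\widehat{\Pi}^{(i)}_{x_A}\widehat{\Pi}^{(k\neq i)}_{x'_A}\widehat{\Pi}^{(i)}_{x_A}=c_{x_A,x'_A}\widehat{\Pi}^{(i)}_{x_A}$ together with the formal identity $\int\dd{x'_A}x'_A=0$. Your conclusion stands, but you should replace the conditioning appeal with this (or an equivalent) collapse computation, so that the argument does not silently rely on a formula that is false for quantum projective measurements in general.
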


\begin{proof}
Since $\bx_A, \V_{A}$ are in the standard form, $\bx_A=\zero$ and an element of $\V_{A}$ is of the form $V_{A,jk} = v_j \delta_{jk}$, where $v_{1}\equiv \expval{\hq_A^2}, v_{2}\equiv \expval{\hp_A^2}$ and $\delta_{jk}$ represents the Kronecker delta. To prove $\C = \V_A\T^t$, we will show that 
    \begin{eqnarray}
        C_{ij} = v_i \sum_k \delta_{ik} T_{jk}= v_i T_{ji}.
    \end{eqnarray}

We are concerned with a temporal causal mechanism by which a initial state and a Gaussian channel are given by $\rho_A$ and $\G:A \to B$, respectively. Following Ref.~\cite{zhang20different}, the cross-correlation matrix 
$C_{ij} \equiv \frac{1}{2}\expval{\hx_{A,i} \hx_{B,j} + \hx_{B,j}\hx_{A,i}}$ of the corresponding temporally compatible space-time CM is given by  
\begin{eqnarray}
     C_{ij} &=& \int \dd{x_A} x_A  \tr\left[ \G\left(\widehat{\Pi}^{(i)}_{x_A}\rho_A\widehat{\Pi}^{(i)}_{x_A} \right) \hx_j^{(B)} \right] \\
    &=& \int  \dd{x_A} x_A  \tr\left[\widehat{\Pi}^{(i)}_{x_A}\rho_A\widehat{\Pi}^{(i)}_{x_A} \G\+(\hx_j^{(B)}) \right], \label{eq:cv_corr}
\end{eqnarray}
where $\widehat{\Pi}^{(i)}_{x_A}$ represents the projector of $\hx_{A,i}$ associated with its eigenvalue $x_A$, and  $\widehat{\Pi}^{(i)}_{x_A}\rho_A\widehat{\Pi}^{(i)}_{x_A}$ represents the unnormalized post-measurement state after achieving the outcome $x_A$ when measured with $\hx_{A,i}$. Also, $\G\+$ represents Hilbert--Schmidt adjoint of $\G$. Before we complete the proof, let us briefly take a look at how $\T,\N,\d$ play roles in transforming quadrature operators below.

The Hilbert--Schmidt adjoint map $\G\+$ transforms the quadrature operators on $B$ into those on $A$. Without loss of generality, $\hx_{A,i}$ and $\hx_{B,j}$ can be related by introducing quadrature operators $\ax_E$ of an uncorrelated environmental system in a state $\rho_E$ as
\begin{eqnarray}
    \G\+:\hx_{B,j} \to \G\+(\hx_{B,j})=\sum_k T_{jk}\hx_{A,k} + \hx_{E,j}, \label{eq:cv_gadj}
\end{eqnarray}
with some linear transformation matrix $\T$. The local CM, $\V_B$, on Bob's side then becomes 
\begin{widetext}
    \begin{eqnarray}
    V_{B,jk} &\equiv& 
    \frac{1}{2}\expval{\Delta\hx_{B,j} \Delta\hx_{B,k} + \Delta\hx_{B,k}\Delta\hx_{B,j}}_{\G(\rho_A)} \\
    &=& \frac{1}{2}\expval{\Delta\G\+(\hx_{B,j}) \Delta\G\+(\hx_{B,k}) + \Delta\G\+(\hx_{B,k})\G\+(\Delta\hx_{B,j})}_{\rho_A\rho_E} \\
    &=& \sum_{l}\sum_{m} T_{jl}\tr\left[ \frac{1}{2}\left(\Delta\hx_{A,l}\Delta\hx_{A,m}+\Delta\hx_{A,m}\Delta\hx_{A,l}\right)\rho_A\right]T_{km} + \tr\left[ \frac{1}{2}\left(\Delta\hx_{E,j}\Delta\hx_{E,k}+\Delta\hx_{E,k}\Delta\hx_{E,j}\right)\rho_E\right] \nonumber\\
    &+& \tr\left[\left(\sum_{l} T_{jl}\hx_{A,l}\hx_{E,k}\right)\rho_A\rho_E\right]+\tr\left[\left(\sum_{m} T_{km}\hx_{A,m}\hx_{E,j}\right)\rho_A\rho_E\right]\\
    &=& \sum_{l,m} T_{jl}\left(\frac{1}{2}\expval{\Delta\hx_{A,l}\Delta\hx_{A,m}+\Delta\hx_{A,m}\Delta\hx_{A,l}}_{\rho_A}\right)T_{km} + \frac{1}{2}\expval{\Delta\hx_{E,j}\Delta\hx_{E,k}+\Delta\hx_{E,k}\Delta\hx_{E,j}}_{\rho_E} \nonumber\\
    &+& \sum_{l} T_{jl}\tr\left[\hx_{A,l}\rho_A\right]\tr\left[\hx_{E,k}\rho_E\right]+\sum_{m}T_{km}\tr\left[\hx_{A,m}\rho_A\right]\tr\left[\hx_{E,j}\rho_E\right] \\
    &=& \sum_{l,m} T_{jl} V_{A,lm} T_{km} + N_{jk},
\end{eqnarray}
\end{widetext}
where $N_{jk}\equiv\frac{1}{2}\expval{\{\Delta\hx_{E,j},\Delta\hx_{E,k}\}}_{\rho_E} + \sum_{l} T_{jl}\tr[\hx_{A,l}\rho_A]\tr[\hx_{E,k}\rho_E]+\sum_{m}T_{km}\tr[\hx_{A,m}\rho_A]\tr[\hx_{E,j}\rho_E]$. The subscript of the expectation value symbol $\expval{\cdot}$ represents the quantum state that is being measured. In the third equation, we used Eq.~(\ref{eq:cv_gadj}) and the fact that the system of interest and the environmental system are uncorrelated. Here we omit tensor products for brevity, but note that operators with subscript $_A$ and those with $_E$ live in different Hilbert spaces, thus they commute. Consequently, the fourth equation follows. 

Therefore, we have the relation $\V_B = \T\V_A\T\t +\N$. These matrix $\T$ and $\N$ fully characterize the Gaussian channel $\G$ together with a displacement vector $\d\equiv( \begin{smallmatrix}
\expval{\hq_E} \\ \expval{\hp_E}
\end{smallmatrix})$ \cite{weedbrook12gaussian, holevo2001evaluating}.

Returning to Eq.~(\ref{eq:cv_corr}), we substitute $\sum_k T_{jk}\hx_{A,k} + \hx_{E,j}$ for $\G\+(\hx_{B,j})$ as in the box above, but with $\expval{\hx_{E,j}} =0$ as we are concerned with CMs in standard basis:
\begin{widetext}
    \begin{eqnarray}
    & &C_{ij} = \int \dd{x_A} x_A  \tr\left[\widehat{\Pi}^{(i)}_{x_A}\rho_A\widehat{\Pi}^{(i)}_{x_A} \G\+(\hx_j^{(B)}) \right]\\
    &=& \sum_k T_{jk}\int \dd{x_A} x_A  \tr\left[\widehat{\Pi}^{(i)}_{x_A}\rho_A\widehat{\Pi}^{(i)}_{x_A} \hx_{A,k} \right] + \int \dd{x_A} x_A  \tr\left[\widehat{\Pi}^{(i)}_{x_A}\rho_A\widehat{\Pi}^{(i)}_{x_A} \hx_{E,j} \right]\\
    &=& T_{j,k=i} \int \dd{x_A} x_A  \tr\left[\widehat{\Pi}^{(i)}_{x_A}\rho_A\widehat{\Pi}^{(i)}_{x_A} \hx_{A,k=i} \right] + T_{j,k\ne i} \int \dd{x_A} x_A  \tr\left[\widehat{\Pi}^{(i)}_{x_A}\rho_A\widehat{\Pi}^{(i)}_{x_A} \hx_{A,k\ne i} \right] \\
    &=& T_{ji} \int \dd{x_A} x_A  \tr\left[\widehat{\Pi}^{(i)}_{x_A}\rho_A\widehat{\Pi}^{(i)}_{x_A} \left(\int \dd{x'_A} x'_A \widehat{\Pi}^{(i)}_{x'_A} \right)\right] + T_{j,k\ne i} \int \dd{x_A} x_A  \tr\left[\widehat{\Pi}^{(i)}_{x_A}\rho_A\widehat{\Pi}^{(i)}_{x_A} \left( \int \dd{x'_A} x'_A \widehat{\Pi}^{(k\ne i)}_{x'_A} \right) \right] \\
    &=& T_{ji} \int\int \dd{x_A} \dd{x'_A} x_A x'_A \delta_{x_A,x'_A} \tr\left[\widehat{\Pi}^{(i)}_{x_A}\rho_A \right] + T_{j,k\ne i} \int \int \dd{x_A} \dd{x'_A} x_A x'_A \tr\left[\rho_A\Pi^{(i)}_{x_A} \widehat{\Pi}^{(k\ne i)}_{x'_A} \widehat{\Pi}^{(i)}_{x_A} \right] \\
    &=& T_{ji} \expval{\hx_{A,i}^2} + T_{j,k\ne i} \int  \dd{x_A} x_A \left( \underbrace{\int \dd{x'_A}  x'_A}_{=0} \right) \tr\left[\rho_A c_{x_A,x'_A}\widehat{\Pi}^{(i)}_{x_A} \right]  \\
    &=& v_i T_{ji}.
\end{eqnarray}
\end{widetext}
In the third equation, we used $\int \dd{x_A} x_A  \tr\left[\widehat{\Pi}^{(i)}_{x_A}\rho_A\widehat{\Pi}^{(i)}_{x_A} \hx_{E,j} \right]=0$ because $\expval{\hx_{E,j}} =0$ and $\hx_{E,j}$ is an operator on $E$ not on $A$. We also used $\Pi^{(i)}_{x_A} \widehat{\Pi}^{(i)}_{x'_A}=\delta_{x_A,x'_A}\widehat{\Pi}^{(i)}_{x_A}$ and $\widehat{\Pi}^{(i)}_{x_A} \widehat{\Pi}^{(k\ne i)}_{x'_A} \widehat{\Pi}^{(i)}_{x_A} = c_{x_A,x'_A}\widehat{\Pi}^{(i)}_{x_A}$ for some constant $c_{x_A,x'_A}$ as $\widehat{\Pi}^{(i)}_{x_A}$ is a rank-1 projector. Consequently, we have $C_{ij}=v_i T_{ji}$. \\

Hence, we conclude that $\C = \V_A\T\t$.
\end{proof}
 
\setcounter{thm}{2}
\begin{thm}
    A spatial-temporal qumode-pair with covariance matrix $\V_{AB}$ and an associated forward pseudo-channel with transformation matrix $\T$ and noise matrix $\N$ has forward atemporality robustness
    \begin{eqnarray}
        \overrightarrow{f}(\V_{AB})=\max(0, \abs{\omega}-\sqrt{\det\N}),
    \end{eqnarray}
    with $\omega = 1-\det\T=1-\frac{\det\C}{\det\V_A}$ and $\det\N=\frac{\det\V_{AB}}{\det \V_A}$. Moreover, $\overrightarrow{f}(\V_{AB}) > 0$ is a necessary and sufficient condition for forward Gaussian atemporality.   
\end{thm}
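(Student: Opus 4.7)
The plan is to reduce the CP-based definition of $\overrightarrow{f}$ to a $2\times 2$ Hermitian positive-semidefiniteness test and then solve it explicitly. I would start by exploiting the $2\times 2$ identity $\T\bOmega\T\t = (\det\T)\,\bOmega$, valid because $\bOmega$ is antisymmetric, to collapse the atemporality matrix to
\begin{equation*}
  \X(\T,\N+\mu\I) \;=\; \N+\mu\I + i\omega\bOmega, \qquad \omega := 1-\det\T.
\end{equation*}
Substituting $\T = \C\t\V_A^{-1}$ from \cref{thm:cv_retrieve} gives $\det\T = \det\C/\det\V_A$, recovering the stated $\omega$.

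Second, since $\X$ is $2\times 2$ Hermitian, PSD-ness is equivalent to $\tr\X\ge 0$ and $\det\X\ge 0$, which expand to $\tr\N + 2\mu \ge 0$ and $\det(\N+\mu\I)\ge \omega^2$. The trace condition is automatic for $\mu\ge 0$ on any valid pseudo-channel, so the problem collapses to computing the smallest $\mu\ge 0$ with $\det(\N+\mu\I)\ge \omega^2$. Since $\N$ is a real $2\times 2$ matrix, $\det(\N+\mu\I) = \det\N + \mu\tr\N + \mu^2$ is a manifestly monotone increasing function of $\mu$, so this smallest value is uniquely defined.

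Third, the key technical step is a Minkowski-type determinant inequality for $2\times 2$ PSD matrices, $\sqrt{\det(\N+\mu\I)} \ge \sqrt{\det\N} + \mu$, which immediately gives that $\mu\ge |\omega|-\sqrt{\det\N}$ suffices for the inequality, yielding $\overrightarrow{f}\le\max(0,|\omega|-\sqrt{\det\N})$. I expect the matching lower bound to be the main obstacle, because equality in Minkowski requires $\N\propto\I$, which need not hold in the standard basis. My plan to close the gap is to invoke the Williamson normal form of $\N$: there is a symplectic $\S$ with $\S\N\S\t = \sqrt{\det\N}\,\I$, and one can check that conjugating $\X$ by the corresponding symplectic action on Bob leaves both $\omega$ and $\det\N$ invariant while sending $\N\to\sqrt{\det\N}\,\I$, reducing the problem to the scalar case where Minkowski saturates and the infimum is attained at $|\omega|-\sqrt{\det\N}$.

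Finally, for the repackaging, the substitution $\det\N = \det\V_{AB}/\det\V_A$ comes from the Schur-complement determinant identity applied to the block form of $\V_{AB}$ together with $\N = \V_B - \C\t\V_A^{-1}\C$. The ``iff'' characterization is then immediate: $\overrightarrow{f}>0 \iff |\omega|>\sqrt{\det\N} \iff \det\X<0 \iff \X\not\ge 0$, so positivity of $\overrightarrow{f}$ coincides exactly with the failure of the forward CP condition and hence with forward Gaussian atemporality.
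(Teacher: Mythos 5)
Your reduction to the $2\times 2$ test $\det(\N+\mu\I)\ge\omega^2$ and your Minkowski upper bound are sound and match the first half of the paper's argument (the paper likewise uses $\T\bOmega\T\t=(\det\T)\bOmega$ and diagonalizes $\N$ by an orthogonal, hence $\bOmega$-preserving, rotation). The gap sits exactly where you predicted it, but your proposed repair does not close it. Conjugating Bob's mode by a symplectic $\S$ with $\S\N\S\t=\sqrt{\det\N}\,\I$ does preserve $\omega$, $\det\N$ and the positivity of $\X$, but it sends the added noise $\mu\I$ to $\mu\,\S\S\t$, which is not isotropic unless $\S$ is orthogonal; the optimization is therefore not reduced to the scalar case. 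Worse, under the literal isotropic-noise definition the matching lower bound is simply false whenever $\N\not\propto\I$: the smallest $\mu$ with $(n_1+\mu)(n_2+\mu)\ge\omega^2$ is $\mu^*=\tfrac{1}{2}\bigl(-(n_1+n_2)+\sqrt{(n_1-n_2)^2+4\omega^2}\bigr)$, and substituting $\mu_0=\abs{\omega}-\sqrt{n_1n_2}$ into the quadratic gives $\omega^2+(n_1+n_2-2\sqrt{n_1n_2})(\abs{\omega}-\sqrt{n_1n_2})>\omega^2$ whenever $n_1\ne n_2$ and $\abs{\omega}>\sqrt{n_1n_2}$, so $\mu^*<\mu_0$ strictly. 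No argument can rescue the lower bound as you have set it up.

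The paper avoids this by changing the noise model in the appendix: it admits an arbitrary additive noise matrix $\mathbf{E}\ge\zero$, measures its size by $\sqrt{\det\mathbf{E}}$, and computes (by Lagrange multipliers, first for diagonal and then for general $\mathbf{E}$) the supremum of $\sqrt{\det\mathbf{E}}$ over all noise that keeps the state atemporal, i.e.\ with $\det(\N+\mathbf{E})\le\omega^2$. That supremum equals $\abs{\omega}-\sqrt{\det\N}$ and is attained at the anisotropic matrix $\mathbf{E}=(\abs{\omega}-\sqrt{n_1n_2})\,\mathrm{diag}\bigl(\sqrt{n_1/n_2},\sqrt{n_2/n_1}\bigr)$; your Minkowski inequality $\sqrt{\det(\N+\mathbf{E})}\ge\sqrt{\det\N}+\sqrt{\det\mathbf{E}}$ in fact proves this version in one line and would replace the paper's multiplier computation. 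So to make your proof go through you must either restrict to $\N\propto\I$ (which covers the paper's forward examples but not its reverse-channel one) or adopt the determinant-based noise measure, under which your symplectic-invariance idea becomes legitimate because both the admissible set $\{\mathbf{E}\ge\zero\}$ and the functional $\sqrt{\det\mathbf{E}}$ are preserved by $\mathbf{E}\mapsto\S\mathbf{E}\S\t$. The closing ``iff'' is unaffected either way, since every candidate formula is positive exactly when $\abs{\omega}>\sqrt{\det\N}$, i.e.\ when $\X\not\ge\zero$.
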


\begin{proof}
    Here, we prove 
    \begin{align}    \sup_{\substack{\mathbf{E}\ge \zero \text{ s.t.}\\ \X(\T,\N+\mathbf{E})< \zero}} \mathcal{N}(\mathbf{E}) \quad = \quad  \max(0,\abs{\omega}-\sqrt{\det\mathbf{N}}), 
    \end{align}
    when $\mathcal{N}(\mathbf{E})\equiv \sqrt{\det \mathbf{E}}$ is used as a measure of noise and $\omega=1-\frac{\det\C}{\det\V_A}$.

    Let us take a quick look at properties of a forward atemporality matrix. Given a space-time covariance matrix $\V_{AB}$, the forward pseudo-channel (characterized by its transformation matrix $\T$ and noise matrix $\N\ge\zero$) is uniquely determined, consequently the forward atemporality matrix $\X$ is also uniquely determined as
\begin{eqnarray}
    \X(\T,\N) = \N+i\M, \quad \with \quad \M\equiv \bOmega-\T\bOmega \T\t.
\end{eqnarray}
Using $\T\bOmega \T\t=(\det\T) \bOmega$, we note that 
\begin{align}
    \M &\equiv \bOmega-\T\bOmega \T\t\\
    &= (1-\det\T)\bOmega\\
    &= \left(1-\frac{\det\C}{\det\V_A}\right) \bOmega  =: \omega\bOmega,
\end{align}
where in the third equation we used \cref{thm:cv_retrieve}, i.e., $\T=\C\t\V_A^{-1}$. Thus we can write $\X(\T,\N)$ as $\X(\T,\N) = \N+i\omega\bOmega$. In turn, we observe that $\X(\T,\N)\ge \zero \Leftrightarrow \abs{\omega}-\sqrt{n_1n_2}\le 0$, where $n_1,n_2$ are the eigenvalues of $\N$:

\begin{align}
        \X(\T,\N)= \N + i\omega\bOmega \ge \zero &\Leftrightarrow \det\X(\T,\N) \ge \zero \\
&\Leftrightarrow \det\mathbf{O} \X(\T,\N) \mathbf{O}\t \ge \zero\\
&\Leftrightarrow \det \begin{pmatrix}
        n_1 & i\omega \\ -i\omega & n_2
\end{pmatrix}\ge \zero\\
&\Leftrightarrow \abs{\omega} - \sqrt{n_1n_2}\le 0,
    \end{align}
    where $\mathbf{O}$ is an orthogonal transformation that diagonalizes $\N$, i.e., $\mathbf{O}\N\mathbf{O}\t = \begin{pmatrix}
        n_1 & 0 \\ 0 & n_2
\end{pmatrix}$, where $n_1,n_2$ are the eigenvalues of $\N$. The first line follows from $\tr \X(\T,\N) \ge 0$ because 
$\tr\N \ge 0$ and $\tr\bOmega=0$. The second line follows from $\det\mathbf{O} =1$. The third line follows from $\mathbf{O}\bOmega\mathbf{O}\t = \bOmega$.

Now, (1) let us consider an diagonal noise matrix to be added to $\N$, thus resulting in a new noise matrix $\tilde{\N} = \N+\mathbf{E}_\text{diag}$, where $\mathbf{E}_\text{diag}:=\begin{pmatrix}
    \epsilon_1 & 0\\
    0 & \epsilon_2
\end{pmatrix}$ represents the additional noise matrix. In this case, the new noise matrix has its eigenvalues as $\tilde{n}_1=n_1+\epsilon_1$ and $\tilde{n}_2=n_2+\epsilon_2$. Provided that, we will show that 
\begin{eqnarray}
     \sup_{\abs{w}-\sqrt{\tilde{n}_1\tilde{n}_2}>0} \mathcal{N}(\mathbf{E}_\text{diag}) = \abs{w}-\sqrt{n_1n_2},
\end{eqnarray}
when $\abs{w}-\sqrt{n_1n_2}>0$. Or equivalently, 
 \begin{eqnarray}
     \max_{\abs{w}-\sqrt{(n_1+\epsilon_1)(n_2+\epsilon_2)}=0} \sqrt{\epsilon_1\epsilon_2} = \abs{w}-\sqrt{n_1n_2}.
\end{eqnarray}
The maximization problem on the left hand side can be solved by using the method of Lagrangian multiplier as follows: Define a Lagrangian function as 
\begin{eqnarray}
    \mathcal{L}(\epsilon_1,\epsilon_2,\lambda):= \epsilon_1\epsilon_2 + \lambda(w^2-(n_1+\epsilon_1)(n_2+\epsilon_2)).
\end{eqnarray}
Then, $\epsilon_1\epsilon_2$ (and also $\sqrt{\epsilon_1\epsilon_2}$) takes its maximum when $\frac{\partial\mathcal{L}}{\partial \epsilon_i}=0, \forall~i$ and $\frac{\partial\mathcal{L}}{\partial \lambda}=0$. That is, 
\begin{eqnarray}
    \frac{\partial\mathcal{L}}{\partial \epsilon_1}&=&0 : \epsilon_2 = -n_2 +\frac{n_2}{1-\lambda}\\
    \frac{\partial\mathcal{L}}{\partial \epsilon_2}&=&0 : \epsilon_1 = -n_1 +\frac{n_1}{1-\lambda}\\
    \frac{\partial\mathcal{L}}{\partial \lambda}&=&0 : w^2=(n_1+\epsilon_1)(n_2+\epsilon_2).
\end{eqnarray}
After solving these simultaneous equations in terms of $\epsilon_1,\epsilon_2,\lambda$, we have $\lambda = 1 \pm \frac{\sqrt{n_1n_2}}{\abs{w}}$. But $\lambda = 1 + \frac{\sqrt{n_1n_2}}{\abs{w}}$ leads to negative $\epsilon$'s, which contradicts to the fact that a noise matrix is positive semidefinite. Thus, the solution becomes
\begin{eqnarray}
    \epsilon_1 &=& \sqrt{\frac{n_1}{n_2}}(\abs{w}-\sqrt{n_1n_2}), \\
    \epsilon_2 &=& \sqrt{\frac{n_2}{n_1}}(\abs{w}-\sqrt{n_1n_2}), \\
    \lambda &=& 1 - \frac{\sqrt{n_1n_2}}{\abs{w}}.
\end{eqnarray}
Then it simply follows that the maximum of $\sqrt{\epsilon_1\epsilon_2}$ is given by 
\begin{eqnarray}
   &&\max_{\abs{w}-\sqrt{(n_1+\epsilon_1)(n_2+\epsilon_2)}=0} \sqrt{\epsilon_1\epsilon_2} \\
   &=& \sqrt{\sqrt{\frac{n_1}{n_2}}(\abs{w}-\sqrt{n_1n_2})\sqrt{\frac{n_2}{n_1}}(\abs{w}-\sqrt{n_1n_2})}\\
    &=& \abs{w}-\sqrt{n_1n_2}.
\end{eqnarray}

(2) Now we consider a more general case where the additional noise matrix is not diagonal, i.e., $\mathbf{E}:=\begin{pmatrix}
    \epsilon_1 & \epsilon_3\\
    \epsilon_3 & \epsilon_2
\end{pmatrix}$. Let $\tilde{\N}$ be the new noise matrix after adding $\mathbf{E}$ to $\N$, i.e., $\tilde{\N}=\mathbf{E}+\N$, and let $\tilde{n}_1,\tilde{n}_2$ be the eigenvalues of $\tilde{\N}$. Similar to the case (1) above, we will show that 
\begin{eqnarray}
    \abs{w}-\sqrt{n_1n_2} = \max_{\abs{w}-\sqrt{\tilde{n}_1\tilde{n}_2}=0} \sqrt{\det \mathbf{E}}.
\end{eqnarray}
We then define a Lagrangian function as 
\begin{eqnarray}
     &&\mathcal{L}(\epsilon_1,\epsilon_2,\epsilon_3,\lambda)\\
     &:=& \det \mathbf{E} + \lambda(w^2-(n_1+\epsilon_1)(n_2+\epsilon_2)+\epsilon_3^2)\\
     &=& \epsilon_1\epsilon_2-\epsilon_3^2 + \lambda(w^2-(n_1+\epsilon_1)(n_2+\epsilon_2)+\epsilon_3^2).
\end{eqnarray}
When $\frac{\partial\mathcal{L}}{\partial \epsilon_i}=0, \forall~i$ and $\frac{\partial\mathcal{L}}{\partial \lambda}=0$, $\epsilon_1\epsilon_2-\epsilon_3^2$ is maximized. In particular, $\frac{\partial\mathcal{L}}{\partial \epsilon_3}=0$ leads to $(\lambda+1)\epsilon_3=0$. If $\lambda=-1$, it leads to a solution which contradicts to $\mathbf{E}\ge \zero$. Thus, we have $\epsilon_3=0$. Then the maximization problem becomes the same as the one considered in the case (1), that is, it follows that 
\begin{eqnarray}
    \max_{\abs{w}-\sqrt{\tilde{n}_1\tilde{n}_2}=0} \sqrt{\det \mathbf{E}} = \abs{w}-\sqrt{n_1n_2}.
\end{eqnarray}

Hence, we conclude that \gat\ amounts to the maximum amount of noise that can be added before breaking \gat.

Lastly, we verify $\det\N=\frac{\det\V_{AB}}{\det \V_A}$. Note that $\N$ coincides with the Schur complement of block matrix $\V_A$ of the whole matrix $\V_{AB}$, denoted by $\V_{AB}/\V_A$ \cite{Schur1918schurcomplement,zhang2006schur}. Following Schur determinant formula (see Eq.~(0.3.2) in Ref.~\cite{zhang2006schur}), the determinant of $\N=\V_{AB}/\V_A$ is easily computed by $\frac{\det\V_{AB}}{\det \V_A}$.  
\end{proof}

\section{Examples.}\label{app:examples}

\textit{Example 1.} Consider two thermal states with variance $v\ge1$, each labeled by $A$ and $B$, together passing through a two-mode squeezing operation with a squeezing parameter $r\ge0$. The CM of the resulting Gaussian state is of the form 
    \begin{eqnarray*}
    \V_{AB}=
    \begin{pmatrix}
    v\cosh{2r} & 0 & v\sinh{2r} & 0\\
    0 & v\cosh{2r} & 0 & -v\sinh{2r}\\
    v\sinh{2r} & 0 & v\cosh{2r} & 0\\
    0 & -v\sinh{2r} & 0 & v\cosh{2r}
    \end{pmatrix}.
    \end{eqnarray*}    
    Then, the states are Gaussian-atemporal for 
    $ r > r_{atemp}:= \frac{1}{2}\arccos{\frac{v+\sqrt{v^2+8}}{4}}$, and entangled for $r>r_{ent}:=\frac{1}{2}\ln v$.
    In this class of Gaussian states, we observe that Gaussian-atemporal states are always entangled. However, there exist the states that are entangled but not Gaussian-atemporal, when $r_{ent} < r \le r_{atemp}$. Interestingly though, if the amount of entanglement, measured by the logarithmic negativity \cite{vidal02negativity,plenio2005logarithmic}, exceeds 
    $E_{max}:\approx 0.1882$, it guarantees that the state is Gaussian-atemporal. The logarithm is taken to be natural base for the logarithmic negativity.

\begin{proof}
    First of all, we note that $\overrightarrow{f}(\V_{AB})=\overleftarrow{f}(\V_{AB})=f(\V_{AB})$ since space-time CMs of the form as above are invariant under swap operations. Therefore, we will use \gar\ to derive the conditions for (1) \gat, and (2) use the positive partial transposition (PPT) \cite{peres96ppt, horodecki96ppt} to derive the conditions for entanglement. (3) We also prove that the logarithmic negativity of these states is maximized by
    \begin{eqnarray}
        \max_{f(\V_{AB})=0} E(\V_{AB}) \approx 0.1882,
    \end{eqnarray}
    where $E(\cdot)$ represents the logarithmic negativity \cite{vidal02negativity,plenio2005logarithmic} with natural base.\\

    (1) To derive the conditions for \gat, we first need to find the forward Gaussian pseudo-channel, characterized by $\T$ and $\N$. By using \cref{thm:cv_retrieve}, they are given by 
    \begin{eqnarray}
        \T = \tanh{2r}\mathbf{Z} \quad \text{and} \quad \N = \frac{v}{\cosh{2r}}\I,
    \end{eqnarray}
    where $\mathbf{Z}\equiv \begin{pmatrix}
    1 & 0 \\
    0 & -1 \\
\end{pmatrix}$ and $\I\equiv \begin{pmatrix}
    1 & 0 \\
    0 & 1 \\
\end{pmatrix}$. By using this, we can readily derive that 
\begin{eqnarray}
    \overrightarrow{f}(\V_{AB}) &=& \max(0, 1+\frac{\sinh^2{2r}}{\cosh^2{2r}}-\frac{v}{\cosh{2r}})\\
    &=& \max(0, \frac{2\cosh^2{2r}-v\cosh{2r}-1}{\cosh^2{2r}}), 
\end{eqnarray}
and $\overrightarrow{f}(\V_{AB})=0$ if and only if $2\cosh^2{2r}-v\cosh{2r}-1\le 0$; otherwise, it takes a non-zero value. Let $\zeta:=\cosh{2r}$ be a positive variable such that $\zeta\ge1$ and $h$ a function of $\zeta$, defined as $h(\zeta)\equiv 2\zeta^2-v\zeta-1$. That is, $\overrightarrow{f}(\V_{AB})=0$ if and only if $h(\zeta)\le 0$. Now let $\zeta^{(+)}$ be the largest solution of the quadratic equation $h(\zeta)=0$, i.e., $\zeta^{(+)}:=\frac{v+\sqrt{v^2+8}}{4}$. 
 
Since $v\ge1$, it simply follows that $\zeta^{(+)}\ge1$. Note also that the other solution $\zeta^{(-)}:=\frac{v-\sqrt{v^2+8}}{4}$ always takes a negative value, but this is not a feasible solution because $\zeta:=\cosh{2r}\ge1$. 

For $1 \le \zeta \le \zeta^{(+)}$, we have $h(\zeta)\le h(\zeta^{(+)}) = 0$, which implies $\overrightarrow{f}(\V_{AB})=0$; For $\zeta > \zeta^{(+)}$, on the other hand, we have $h(\zeta)>0$, which implies $\overrightarrow{f}(\V_{AB})>0$. Thus we arrive at the following:
\begin{itemize}
    \item It is Gaussian-atemporal for $\cosh{2r}>\frac{v+\sqrt{v^2+8}}{4}$.
\end{itemize}

(2) We use the PPT criterion to derive the conditions for entanglement \cite{peres96ppt, horodecki96ppt}, and the fact that the PPT criterion becomes a necessary and sufficient condition for entanglement for two-mode Gaussian states \cite{werner01bound}. After applying the partial transpose, the CM $\V_{AB}^\star$ changes to $\tilde{\V}_{AB}\equiv(\I\oplus \mathbf{Z})\V_{AB}^\star(\I\oplus\mathbf{Z})$. Let $\tilde{v}^{(-)}$ be the smallest value in the symplectic spectrum of $\tilde{\V}_{AB}$. Then, $\tilde{v}^{(-)}<1$ is a necessary and sufficient condition for entanglement \cite{serafini03symplectic}.

Now let us find $\tilde{v}^{(-)}$: $\tilde{v}^{(-)}$ is the smallest value in symplectic spectrum of
    \begin{eqnarray}
        \Tilde{\V}_{AB} &=& (\I\oplus \mathbf{Z})\V_{AB}(\I\oplus\mathbf{Z})\\
        &=& \begin{pmatrix}
    v\cosh{2r} & 0 & v\sinh{2r} & 0\\
    0 & v\cosh{2r} & 0 & v\sinh{2r}\\
    v\sinh{2r} & 0 & v\cosh{2r} & 0\\
    0 & v\sinh{2r} & 0 & v\cosh{2r}
\end{pmatrix} \\
&=:& \begin{pmatrix}
    \mathbf{A} & \mathbf{C} \\
    \mathbf{C} & \mathbf{B} 
\end{pmatrix},
    \end{eqnarray} and then it can be found by using the following known formula \cite{serafini2004eigensymplectic},
    \begin{eqnarray}
        \tilde{v}^{(-)} = \sqrt{\frac{\tilde{\Delta}-\sqrt{\tilde{\Delta}^2-4\det{\Tilde{\V}_{AB}}}}{2}},
    \end{eqnarray}
    where $\mathbf{A},\mathbf{B},
    \mathbf{C}$ are 2 by 2 matrices, and $\tilde{\Delta} \equiv \det\mathbf{A} + \det\mathbf{B} + 2\det\mathbf{C}$. In this case, $\Tilde{\Delta}$ amounts to 
    \begin{eqnarray}
        \tilde{\Delta} = 2v^2\cosh^2{2r}+2v^2\sinh^2{2r}.
    \end{eqnarray}
    Also, we can compute for $\det\Tilde{\V}_{AB}$:
    \begin{eqnarray}
        \det\Tilde{\V}_{AB} &=& v\cosh{2r} \begin{vmatrix}
           v\cosh{2r} & 0 & v\sinh{2r} \\
            0 & v\cosh{2r} & 0 \\
            v\sinh{2r} & 0 & v\cosh{2r}
        \end{vmatrix}\nonumber
        \\
        &&+v\sinh{2r} \begin{vmatrix}
            0 & v\cosh{2r} & v\sinh{2r}\\
            v\sinh{2r} & 0 & 0\\
            0 & v\sinh{2r} & v\cosh{2r}
        \end{vmatrix}\nonumber\\
        &=& v^2\cosh^2{2r}(v^2\cosh^2{2r}-v^2\sinh^2{2r}) \nonumber\\
        &&- v^2\sinh^2{2r} (v^2\cosh^2{2r} - v^2\sinh^2{2r})\nonumber\\
        &=& (v^2\cosh^2{2r}-v^2\sinh^2{2r})^2
    \end{eqnarray}
    
    Therefore, we have 
    \begin{widetext}
        \begin{eqnarray}
        \tilde{v}^{(-)} &=& \sqrt{\frac{\tilde{\Delta}-\sqrt{\tilde{\Delta}^2-4\det{\Tilde{\V}_{AB}}}}{2}}\\
        &=& \sqrt{v^2\cosh^2{2r}+v^2\sinh^2{2r}-\sqrt{(v^2\cosh^2{2r}+v^2\sinh^2{2r})^2-(v^2\cosh^2{2r}-v^2\sinh^2{2r})^2}}\\
        &=& \sqrt{v^2\cosh^2{2r}+v^2\sinh^2{2r}-\sqrt{4v^4\cosh^2{2r}\sinh^2{2r}}}\\
    &=& \sqrt{v^2\cosh^2{2r}+v^2\sinh^2{2r}-2v^2\cosh{2r}\sinh{2r}}\\
        &=& v(\cosh{2r}-\sinh{2r})=v\exp{-2r}.
    \end{eqnarray}
    \end{widetext}

Thus we arrive at the following:
\begin{itemize}
    \item It is entangled for $e^{-2r}<\frac{1}{v}$.
\end{itemize}

(3) Lastly, let us show that 
\begin{eqnarray}
    \max_{f(\V_{AB})=0} E(\V_{AB}) \approx 0.1882,
\end{eqnarray}
where $E(\cdot)$ represents the logarithmic negativity \cite{vidal02negativity,plenio2005logarithmic} with natural base.

As we have seen in (1) of this proof, $\V_{AB}$ has the zero \gat, $f(\V_{AB})=0$, if and only if $\cosh 2r \le \frac{v+\sqrt{v^2+8}}{4}$, or equivalently, $v\ge \frac{\cosh 4r}{\cosh 2r}$. 

Now suppose that $\V_{AB}$ satisfies $v\ge \frac{\cosh 4r}{\cosh 2r}$, that is, $f(\V_{AB})=0$. Let $E(\V_{AB})$ be the logarithmic negativity of $\V_{AB}$, i.e., $E(\V_{AB}):= -\ln\tilde{v}^{(-)}$, where $\tilde{v}^{(-)}$ represents the smallest value in symplectic spectrum of $\Tilde{\V}_{AB}\equiv(\I\oplus \mathbf{Z})\V_{AB}(\I\oplus \mathbf{Z})$. Then, we can show that 
\begin{eqnarray}
    E(\V_{AB})&\equiv& -\ln(\tilde{v}^{(-)}) = -\ln v + \ln e^{2r}\\
    &\le& -\ln \frac{\cosh 4r}{\cosh 2r} + 2r.
\end{eqnarray}

The terms in the last line take its maximum, denoted by $E_{max}$, around $r\approx0.2203$ and $E_{max}\approx0.1882$. 

\end{proof}

\cref{fig:ex2_vr} summarizes the \gat\ and entanglement of the states with respect to $v,r$.\\
\begin{figure}
    \centering
    \includegraphics[scale=0.5]{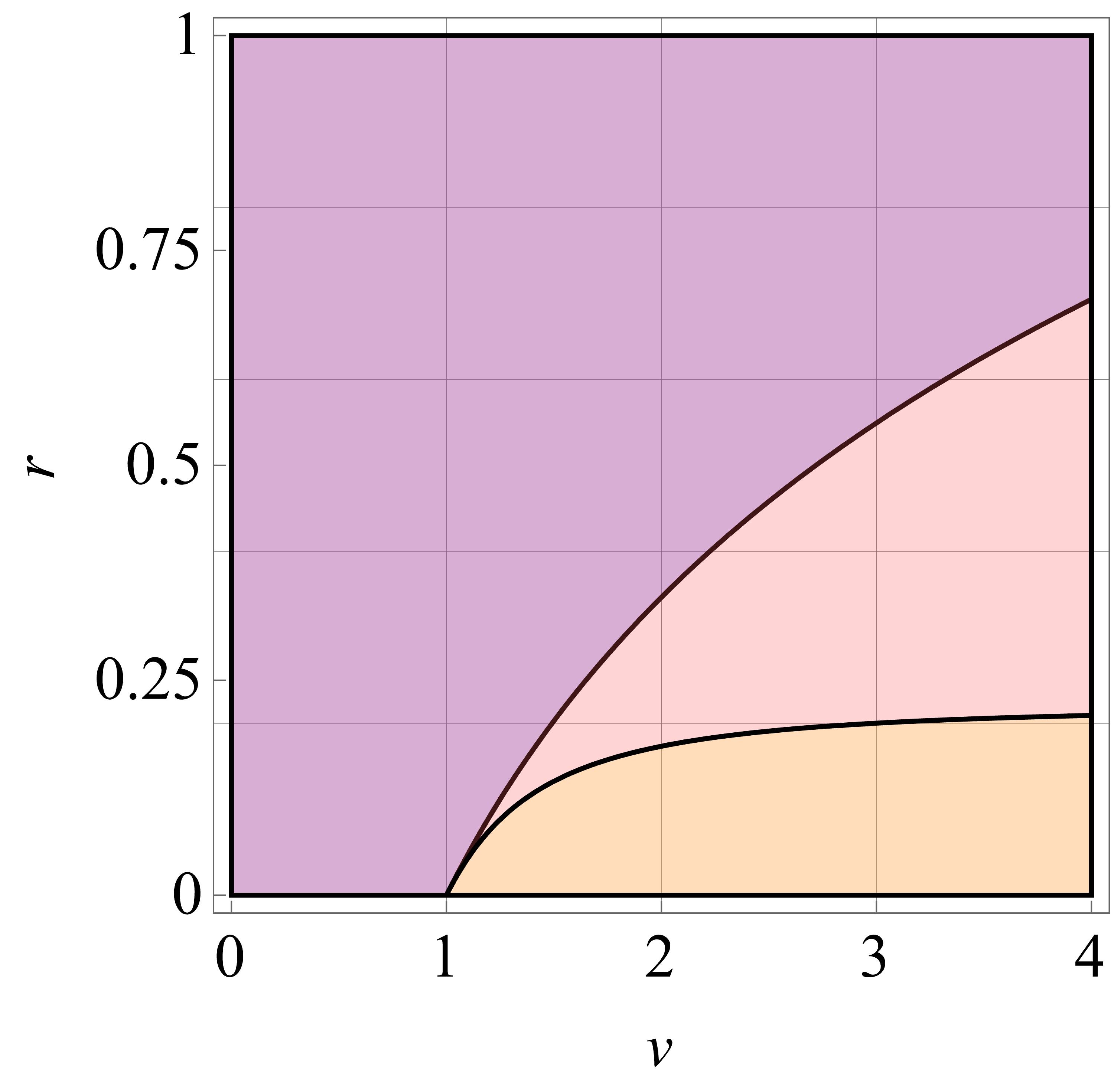}
    \caption{\textbf{A map of states with respect to $v,r$.} States in the yellow region are separable and not Gaussian-atemporal. States in the light pink region are entangled and not Gaussian-atemporal. The states in the rest (purple) are entangled and Gaussian-atemporal. }
    \label{fig:ex2_vr}
\end{figure}

\textit{Example 2.} Consider two Gaussian states $A,B$ whose CMs are $\V_A= (\begin{smallmatrix}
    u & 0\\
    0 & v
\end{smallmatrix})$, $\V_B= (\begin{smallmatrix}
    v & 0\\
    0 & u
\end{smallmatrix})$, respectively, passing through a balanced beam splitter, where $uv\ge1$. Then, the CM $\V_{AB}$ of the form 
\begin{eqnarray}
    \V_{AB} = \frac{1}{2}\begin{pmatrix}
    u+v & 0 & -(u-v) & 0\\
    0 & u+v & 0 & u-v\\
    -(u-v) & 0 & u+v & 0\\
    0 & u-v & 0 & u+v
\end{pmatrix} 
\end{eqnarray} 
represents a general form of the resultant symmetric states.
The positive partial transpose (PPT) criterion \cite{peres96ppt, horodecki96ppt} allows us to find the simple condition for entanglement; it is entangled if and only if $\min(u,v)<1$. By using \gar, we can also find that it is Gaussian-atemporal if and only if 
\begin{eqnarray}
    \frac{1-u}{u^2}>\frac{v-1}{v^2}.
\end{eqnarray}

\begin{proof}
    As $\V_{AB}$ is invariant under the action of switching the role of $A$ and $B$, we have $f(\V_{AB})=\overleftarrow{f}(\V_{AB})=\overrightarrow{f}(\V_{AB})$, thus we will use the forward \gar\ to derive the condition for \gat. The forward pseudo-channel of $\V_{AB}$ is given in a form of its transformation matrix and noise matrix,
    \begin{eqnarray}
        \T = \C\t\V_A^{-1} = -\frac{u-v}{u+v}\mathbf{Z},\\
        \N = \V_{B}-\C\t\V_A^{-1}\C = \frac{2uv}{u+v}\I,
    \end{eqnarray}
    where $\mathbf{Z}\equiv\begin{pmatrix}
        1 & 0\\ 0 & -1
    \end{pmatrix}$ and $\I\equiv\begin{pmatrix}
        1 & 0\\ 0 & 1
    \end{pmatrix}$.
    Then, we can determine the forward atemporality matrix as $\X=\N+i(\bOmega-\T\bOmega\T\t)$. By using \cref{thm:arobust}, a forward \gar\ can be written as  
    \begin{eqnarray}
        \overrightarrow{f}(\V_{AB}) = \max(0, \abs{\omega}-\sqrt{n_1n_2}),
    \end{eqnarray}
    with $\omega = 1-\frac{\det\C}{\det \V_A}$ and eigenvalues $n_1,n_2 \ge 0$ of $\N$. Having $\T,\N$ found, we can readily find that $w=\frac{2(u^2+v^2)}{(u+v)^2}$ and $n_1=n_2=\frac{2uv}{u+v}$. Then it follows that 
    \begin{eqnarray}
        \overrightarrow{f}(\V_{AB}) &=& \max(0, 2\frac{u^2(1-v)+v^2(1-u)}{(u+v)^2}) \\
        &=& \max(0, \frac{2}{(u+v)^2u^2v^2}\left(\frac{(1-v)}{v^2}+\frac{(1-u)}{u^2}\right)).\nonumber
    \end{eqnarray}

    Thus, we can find the condition for $\overrightarrow{f}(\V_{AB})>0$ as follows
    \begin{eqnarray}
        \overrightarrow{f}(\V_{AB})>0 \Leftrightarrow \frac{(1-v)}{v^2}+\frac{(1-u)}{u^2} >0,
    \end{eqnarray}
    since $\frac{2}{(u+v)^2u^2v^2}$ is always positive.
    
    In turn, let us take a look at the conditions for entanglement. In the case of two-mode Gaussian states, the PPT condition becomes the necessary and sufficient condition for entanglement, and thus we have that
    $\tilde{v}^{(-)}<1$ if and only if the state is entangled, where $\tilde{v}^{(-)}$ represents the minimum value in the symplectic spectrum of $\Tilde{\V}_{AB}\equiv (\I\oplus \mathbf{Z})\V_{AB}(\I\oplus\mathbf{Z})$ \cite{peres96ppt, horodecki96ppt,werner01bound,serafini03symplectic}. We will complete the proof by showing that $\tilde{v}^{(-)} = \min(u,v)$.
    
    $\tilde{v}^{(-)}$ is the smallest value in symplectic spectrum of
    \begin{eqnarray}
        \Tilde{\V}_{AB} &=& (\I\oplus \mathbf{Z})\V_{AB}(\I\oplus\mathbf{Z})\\
        &=& \frac{1}{2}\begin{pmatrix}
    u+v & 0 & -(u-v) & 0\\
    0 & u+v & 0 & -(u-v)\\
    -(u-v) & 0 & u+v & 0\\
    0 & -(u-v) & 0 & u+v
\end{pmatrix} \\
&=:& \begin{pmatrix}
    \mathbf{A} & \mathbf{C} \\
    \mathbf{C} & \mathbf{B} 
\end{pmatrix},
    \end{eqnarray} and then it can be found by using the following known formula,
    \begin{eqnarray}
        \tilde{v}^{(-)} = \sqrt{\frac{\Tilde{\Delta}-\sqrt{\Tilde{\Delta}^2-4\det{\Tilde{\V}_{AB}}}}{2}},
    \end{eqnarray}
    where $\Tilde{\Delta} \equiv \mathbf{A} + \mathbf{B} + 2\mathbf{C}$. Here $\Tilde{\Delta}$ yields 
    \begin{eqnarray}
        \Tilde{\Delta} = \frac{(u+v)^2}{4} + \frac{(u+v)^2}{4} + \frac{2(u-v)^2}{4} = u^2+v^2,
    \end{eqnarray}
    and $\det\Tilde{\V}_{AB}$ yields 
    \begin{eqnarray}
        \det\Tilde{\V}_{AB} &=& \frac{(u+v)}{2} \begin{vmatrix}
            \frac{(u+v)}{2} & 0 & \frac{-(u-v)}{2} \\
            0 & \frac{(u+v)}{2} & 0 \\
            \frac{-(u-v)}{2} & 0 & \frac{(u+v)}{2}
        \end{vmatrix}\nonumber\\
        &&-\frac{(u-v)}{2} \begin{vmatrix}
            0 & \frac{(u+v)}{2} & \frac{-(u-v)}{2}\\
            \frac{-(u-v)}{2} & 0 & 0\\
            0 & \frac{-(u-v)}{2} & \frac{(u+v)}{2}
        \end{vmatrix}\\
        &=& \frac{(u+v)^2}{4}\left(\frac{(u+v)^2}{4}-\frac{(u-v)^2}{4}\right) \nonumber\\
        &&- \frac{(u-v)^2}{4} \left(\frac{(u+v)^2}{4} - \frac{(u-v)^2}{4}\right)\\
        &=& \frac{1}{16}\left((u+v)^2-(u-v)^2\right)\left((u+v)^2-(u-v)^2\right) \nonumber\\
        &=& \frac{1}{16}(4uv)(4uv) = u^2v^2.
    \end{eqnarray}
    Having $\Tilde{\Delta}=u^2+v^2$ and $\det\Tilde{\V}_{AB}=u^2v^2$, we have 
    \begin{eqnarray}
        \tilde{v}^{(-)} &=& \sqrt{\frac{\Tilde{\Delta}-\sqrt{\Tilde{\Delta}^2-4\det{\Tilde{\V}_{AB}}}}{2}}\\
        &=& \sqrt{\frac{u^2+v^2-\sqrt{(u^2+v^2)^2-4u^2v^2}}{2}}\\
        &=& \sqrt{\frac{u^2+v^2-\abs{u^2-v^2}}{2}} \\
        &=& \begin{cases}
            \sqrt{\frac{2v^2}{2}} \quad \text{if } u\ge v,\\
            \sqrt{\frac{2u^2}{2}} \quad \text{if } u < v,
        \end{cases} \\
        &=& \begin{cases}
            v \quad \text{if } u\ge v,\\
            u \quad \text{if } u < v,
        \end{cases} \\
        &=& \min(u,v).
    \end{eqnarray}

\end{proof}

\textit{Example 3.} Imagine a Gaussian state whose CM is $\V_A= ( \begin{smallmatrix}
v_1 & 0 \\
0 & v_2
\end{smallmatrix} )$, evolving through a loss channel $\mathcal{L}:A \to B$ such that it is characterized by $\T=\sqrt{\eta}\I, \N = (1-\eta)\I$ with $v_1v_2\ge1$ and a transmission rate $\eta < 1$. By construction, a space-time CM, $\V_{AB}$, is not forward Gaussian-atemporal. However, its reverse Gaussian pseudo-channel is not always a valid channel, which leads to a non-zero reverse Gaussian-atemporality. Indeed, $\V_{AB}$ is reverse Gaussian-atemporal, when $v_1^{(\eta)}v_2^{(\eta)} < 1$, where $v_k^{(\eta)}\equiv\frac{v_k}{1+(v_k-1)\eta}$.

\begin{proof}
    From the scenario stated above, we can construct a space-time CM by using \cref{lem:temporal} as  
    \begin{eqnarray}
        \V_{AB}^\star = \begin{pmatrix}
    v_1 & 0 & v_1\sqrt{\eta} & 0\\
    0 & v_2 & 0 & v_2\sqrt{\eta}\\
    v_1\sqrt{\eta} & 0 & 1+(v_1-1)\eta & 0\\
    0 & v_2\sqrt{\eta} & 0 & 1+(v_2-1)\eta
\end{pmatrix}.    
    \end{eqnarray}
    For notational brevity, we will omit $^\star$ in the rest of the proof. By switching the role of $A$ and $B$ and using \cref{thm:cv_retrieve}, we find the reverse pseudo-channel from it, which is characterized by $\Tilde{\T}, \Tilde{\N}$:
    \begin{eqnarray}
        \Tilde{\T} &=& \C\t\V_B^{-1}\\
            &=& \begin{pmatrix}
    \frac{v\sqrt{\eta}}{1+(v_1-1)\eta} & 0 \\
    0 & \frac{v\sqrt{\eta}}{1+(v_2-1)\eta} 
\end{pmatrix},
    \end{eqnarray}
    and 
    \begin{eqnarray}
        \Tilde{\N} &=& \V_A - \C\t\V_B^{-1}\C,\\
            &=& \begin{pmatrix}
v_1 & 0 \\
0 & v_2
\end{pmatrix}-
\begin{pmatrix}
\frac{v_1^2\eta}{1+(v_1-1)\eta} & 0 \\
0 & \frac{v_2^2\eta}{1+(v_2-1)\eta}
\end{pmatrix}\\
            &=& \begin{pmatrix}
\frac{v_1(1-\eta)}{1+(v_1-1)\eta} & 0 \\
0 & \frac{v_2(1-\eta)}{1+(v_2-1)\eta}
\end{pmatrix}.
    \end{eqnarray}

    Having $\Tilde{\T}, \Tilde{\N}$ found, we can readily have the reverse \gar $\overleftarrow{f}$ as follows:
    \begin{widetext}
        \begin{eqnarray}
        \overleftarrow{f}(\V_{AB}) &=& \max(0, 1-\frac{v_1v_2\eta}{(1+(v_1-1)\eta)(1+(v_2-1)\eta)} - \sqrt{(\frac{v_1(1-\eta)}{1+(v_1-1)\eta})(\frac{v_2(1-\eta)}{1+(v_2-1)\eta}}))\\
        &=& \max(0, 1-v^2\eta-v(1-\eta))\\
        &=& \max(0, (\eta v+1)(1-v)),
    \end{eqnarray}
    \end{widetext}
    where $v=\sqrt{v_1^{(\eta)}v_2^{(\eta)}}$ and $v_k^{(\eta)}\equiv\frac{v_k}{1+(v_k-1)\eta}$. Note that $v>0$.
    
    The reverse \gar, $\overleftarrow{f}(\V_{AB})$, is then non-zero if and only if $1-v >0$, because $\eta>0$ and $v>0$. That is, $\overleftarrow{f}(\V_{AB})>0$ if and only if $v_1^{(\eta)}v_2^{(\eta)} < 1$. Hence, we conclude that $\V_{AB}$ is reverse Gaussian-atemporal, when $v_1^{(\eta)}v_2^{(\eta)} < 1$.
 \end{proof}   

\bibliography{apssamp}

\begin{thebibliography}{52}%
\makeatletter
\providecommand \@ifxundefined [1]{%
 \@ifx{#1\undefined}
}%
\providecommand \@ifnum [1]{%
 \ifnum #1\expandafter \@firstoftwo
 \else \expandafter \@secondoftwo
 \fi
}%
\providecommand \@ifx [1]{%
 \ifx #1\expandafter \@firstoftwo
 \else \expandafter \@secondoftwo
 \fi
}%
\providecommand \natexlab [1]{#1}%
\providecommand \enquote  [1]{``#1''}%
\providecommand \bibnamefont  [1]{#1}%
\providecommand \bibfnamefont [1]{#1}%
\providecommand \citenamefont [1]{#1}%
\providecommand \href@noop [0]{\@secondoftwo}%
\providecommand \href [0]{\begingroup \@sanitize@url \@href}%
\providecommand \@href[1]{\@@startlink{#1}\@@href}%
\providecommand \@@href[1]{\endgroup#1\@@endlink}%
\providecommand \@sanitize@url [0]{\catcode `\\12\catcode `\$12\catcode
  `\&12\catcode `\#12\catcode `\^12\catcode `\_12\catcode `\%12\relax}%
\providecommand \@@startlink[1]{}%
\providecommand \@@endlink[0]{}%
\providecommand \url  [0]{\begingroup\@sanitize@url \@url }%
\providecommand \@url [1]{\endgroup\@href {#1}{\urlprefix }}%
\providecommand \urlprefix  [0]{URL }%
\providecommand \Eprint [0]{\href }%
\providecommand \doibase [0]{https://doi.org/}%
\providecommand \selectlanguage [0]{\@gobble}%
\providecommand \bibinfo  [0]{\@secondoftwo}%
\providecommand \bibfield  [0]{\@secondoftwo}%
\providecommand \translation [1]{[#1]}%
\providecommand \BibitemOpen [0]{}%
\providecommand \bibitemStop [0]{}%
\providecommand \bibitemNoStop [0]{.\EOS\space}%
\providecommand \EOS [0]{\spacefactor3000\relax}%
\providecommand \BibitemShut  [1]{\csname bibitem#1\endcsname}%
\let\auto@bib@innerbib\@empty
\bibitem [{\citenamefont {Andersen}\ \emph {et~al.}(2002)\citenamefont
  {Andersen}, \citenamefont {Buchler}, \citenamefont {Bachor},\ and\
  \citenamefont {Lam}}]{andersen2002nondemolotion}%
  \BibitemOpen
  \bibfield  {author} {\bibinfo {author} {\bibfnamefont {U.~L.}\ \bibnamefont
  {Andersen}}, \bibinfo {author} {\bibfnamefont {B.~C.}\ \bibnamefont
  {Buchler}}, \bibinfo {author} {\bibfnamefont {H.-A.}\ \bibnamefont
  {Bachor}},\ and\ \bibinfo {author} {\bibfnamefont {P.~K.}\ \bibnamefont
  {Lam}},\ }\bibfield  {title} {\bibinfo {title} {Quantum nondemolition
  measurement with a nonclassical meter input and an electro-optic
  enhancement},\ }\href {https://doi.org/10.1088/1464-4266/4/3/380} {\bibfield
  {journal} {\bibinfo  {journal} {Journal of Optics B: Quantum and
  Semiclassical Optics}\ }\textbf {\bibinfo {volume} {4}},\ \bibinfo {pages}
  {S229} (\bibinfo {year} {2002})}\BibitemShut {NoStop}%
\bibitem [{\citenamefont {Buchler}\ \emph {et~al.}(2001)\citenamefont
  {Buchler}, \citenamefont {Lam}, \citenamefont {Bachor}, \citenamefont
  {Andersen},\ and\ \citenamefont {Ralph}}]{buchler2001nondemolition}%
  \BibitemOpen
  \bibfield  {author} {\bibinfo {author} {\bibfnamefont {B.~C.}\ \bibnamefont
  {Buchler}}, \bibinfo {author} {\bibfnamefont {P.~K.}\ \bibnamefont {Lam}},
  \bibinfo {author} {\bibfnamefont {H.-A.}\ \bibnamefont {Bachor}}, \bibinfo
  {author} {\bibfnamefont {U.~L.}\ \bibnamefont {Andersen}},\ and\ \bibinfo
  {author} {\bibfnamefont {T.~C.}\ \bibnamefont {Ralph}},\ }\bibfield  {title}
  {\bibinfo {title} {Squeezing more from a quantum nondemolition measurement},\
  }\href {https://doi.org/10.1103/PhysRevA.65.011803} {\bibfield  {journal}
  {\bibinfo  {journal} {Physical Review A}\ }\textbf {\bibinfo {volume} {65}},\
  \bibinfo {pages} {011803} (\bibinfo {year} {2001})}\BibitemShut {NoStop}%
\bibitem [{Not()}]{Note_process}%
  \BibitemOpen
  \href@noop {} {}\bibinfo {note} {A general causal distribution has been
  regarded as the one described by a process matrix
  \cite{oreshkov12nocausalorder}, which is capable of connecting systems
  without a definite causal order. To give a clear motivation of the scenario
  that we are concerned in an operational manner, we depict a general causal
  distribution mechanism in Fig.~1 with a process of which the causal order of
  observers' interventions is well-defined. Nonetheless, this is merely for the
  sake of a clear description of experimental setup, and the results presented
  in this paper hold true even for the most general causal mechanism described
  by any process matrix.}\BibitemShut {Stop}%
\bibitem [{\citenamefont {Ried}\ \emph {et~al.}(2015)\citenamefont {Ried},
  \citenamefont {Agnew}, \citenamefont {Vermeyden}, \citenamefont {Janzing},
  \citenamefont {Spekkens},\ and\ \citenamefont {Resch}}]{ried15advantage}%
  \BibitemOpen
  \bibfield  {author} {\bibinfo {author} {\bibfnamefont {K.}~\bibnamefont
  {Ried}}, \bibinfo {author} {\bibfnamefont {M.}~\bibnamefont {Agnew}},
  \bibinfo {author} {\bibfnamefont {L.}~\bibnamefont {Vermeyden}}, \bibinfo
  {author} {\bibfnamefont {D.}~\bibnamefont {Janzing}}, \bibinfo {author}
  {\bibfnamefont {R.~W.}\ \bibnamefont {Spekkens}},\ and\ \bibinfo {author}
  {\bibfnamefont {K.~J.}\ \bibnamefont {Resch}},\ }\bibfield  {title} {\bibinfo
  {title} {A quantum advantage for inferring causal structure},\ }\href
  {https://doi.org/10.1038/nphys3266} {\bibfield  {journal} {\bibinfo
  {journal} {Nature Physics}\ }\textbf {\bibinfo {volume} {11}},\ \bibinfo
  {pages} {414} (\bibinfo {year} {2015})}\BibitemShut {NoStop}%
\bibitem [{\citenamefont {Zhang}\ \emph
  {et~al.}(2020{\natexlab{a}})\citenamefont {Zhang}, \citenamefont {Dahlsten},\
  and\ \citenamefont {Vedral}}]{zhang20different}%
  \BibitemOpen
  \bibfield  {author} {\bibinfo {author} {\bibfnamefont {T.}~\bibnamefont
  {Zhang}}, \bibinfo {author} {\bibfnamefont {O.}~\bibnamefont {Dahlsten}},\
  and\ \bibinfo {author} {\bibfnamefont {V.}~\bibnamefont {Vedral}},\
  }\bibfield  {title} {\bibinfo {title} {Different instances of time as
  different quantum modes: quantum states across space-time for continuous
  variables},\ }\href {https://doi.org/10.1088/1367-2630/ab6b9f} {\bibfield
  {journal} {\bibinfo  {journal} {New Journal of Physics}\ }\textbf {\bibinfo
  {volume} {22}},\ \bibinfo {pages} {023029} (\bibinfo {year}
  {2020}{\natexlab{a}})}\BibitemShut {NoStop}%
\bibitem [{\citenamefont {Fitzsimons}\ \emph {et~al.}(2015)\citenamefont
  {Fitzsimons}, \citenamefont {Jones},\ and\ \citenamefont
  {Vedral}}]{fitzsimons15pdo}%
  \BibitemOpen
  \bibfield  {author} {\bibinfo {author} {\bibfnamefont {J.~F.}\ \bibnamefont
  {Fitzsimons}}, \bibinfo {author} {\bibfnamefont {J.~A.}\ \bibnamefont
  {Jones}},\ and\ \bibinfo {author} {\bibfnamefont {V.}~\bibnamefont
  {Vedral}},\ }\bibfield  {title} {\bibinfo {title} {Quantum correlations which
  imply causation},\ }\href {https://doi.org/10.1038/srep18281} {\bibfield
  {journal} {\bibinfo  {journal} {Scientific reports}\ }\textbf {\bibinfo
  {volume} {5}},\ \bibinfo {pages} {18281} (\bibinfo {year}
  {2015})}\BibitemShut {NoStop}%
\bibitem [{\citenamefont {Song}\ \emph {et~al.}(2024)\citenamefont {Song},
  \citenamefont {Narasimhachar}, \citenamefont {Regula}, \citenamefont
  {Elliott},\ and\ \citenamefont {Gu}}]{song2024class}%
  \BibitemOpen
  \bibfield  {author} {\bibinfo {author} {\bibfnamefont {M.}~\bibnamefont
  {Song}}, \bibinfo {author} {\bibfnamefont {V.}~\bibnamefont {Narasimhachar}},
  \bibinfo {author} {\bibfnamefont {B.}~\bibnamefont {Regula}}, \bibinfo
  {author} {\bibfnamefont {T.~J.}\ \bibnamefont {Elliott}},\ and\ \bibinfo
  {author} {\bibfnamefont {M.}~\bibnamefont {Gu}},\ }\bibfield  {title}
  {\bibinfo {title} {Causal classification of spatiotemporal quantum
  correlations},\ }\href {https://doi.org/10.1103/PhysRevLett.133.110202}
  {\bibfield  {journal} {\bibinfo  {journal} {Physical Review Letters}\
  }\textbf {\bibinfo {volume} {133}},\ \bibinfo {pages} {110202} (\bibinfo
  {year} {2024})}\BibitemShut {NoStop}%
\bibitem [{\citenamefont {Song}\ and\ \citenamefont
  {Parzygnat}(2025)}]{song2025bipartite}%
  \BibitemOpen
  \bibfield  {author} {\bibinfo {author} {\bibfnamefont {M.}~\bibnamefont
  {Song}}\ and\ \bibinfo {author} {\bibfnamefont {A.~J.}\ \bibnamefont
  {Parzygnat}},\ }\bibfield  {title} {\bibinfo {title} {Bipartite quantum
  states admitting a causal explanation},\ }\bibfield  {journal} {\bibinfo
  {journal} {arXiv preprint}\ }\href
  {https://doi.org/10.48550/arXiv.2507.14278} {10.48550/arXiv.2507.14278}
  (\bibinfo {year} {2025})\BibitemShut {NoStop}%
\bibitem [{\citenamefont {Zhao}\ \emph {et~al.}(2018)\citenamefont {Zhao},
  \citenamefont {Pisarczyk}, \citenamefont {Thompson}, \citenamefont {Gu},
  \citenamefont {Vedral},\ and\ \citenamefont
  {Fitzsimons}}]{zhao18pdogeometry}%
  \BibitemOpen
  \bibfield  {author} {\bibinfo {author} {\bibfnamefont {Z.}~\bibnamefont
  {Zhao}}, \bibinfo {author} {\bibfnamefont {R.}~\bibnamefont {Pisarczyk}},
  \bibinfo {author} {\bibfnamefont {J.}~\bibnamefont {Thompson}}, \bibinfo
  {author} {\bibfnamefont {M.}~\bibnamefont {Gu}}, \bibinfo {author}
  {\bibfnamefont {V.}~\bibnamefont {Vedral}},\ and\ \bibinfo {author}
  {\bibfnamefont {J.~F.}\ \bibnamefont {Fitzsimons}},\ }\bibfield  {title}
  {\bibinfo {title} {Geometry of quantum correlations in space-time},\ }\href
  {https://doi.org/10.1103/PhysRevA.98.052312} {\bibfield  {journal} {\bibinfo
  {journal} {Physical Review A}\ }\textbf {\bibinfo {volume} {98}},\ \bibinfo
  {pages} {052312} (\bibinfo {year} {2018})}\BibitemShut {NoStop}%
\bibitem [{\citenamefont {Hu}\ and\ \citenamefont
  {Hou}(2018)}]{hu18discrimination}%
  \BibitemOpen
  \bibfield  {author} {\bibinfo {author} {\bibfnamefont {M.}~\bibnamefont
  {Hu}}\ and\ \bibinfo {author} {\bibfnamefont {Y.}~\bibnamefont {Hou}},\
  }\bibfield  {title} {\bibinfo {title} {Discrimination between quantum common
  causes and quantum causality},\ }\bibfield  {journal} {\bibinfo  {journal}
  {Physical Review A}\ }\textbf {\bibinfo {volume} {97}},\ \href
  {https://doi.org/10.1103/PhysRevA.97.062125} {10.1103/PhysRevA.97.062125}
  (\bibinfo {year} {2018})\BibitemShut {NoStop}%
\bibitem [{\citenamefont {Marletto}\ \emph {et~al.}(2019)\citenamefont
  {Marletto}, \citenamefont {Vedral}, \citenamefont {Virzi}, \citenamefont
  {Rebufello}, \citenamefont {Avella}, \citenamefont {Piacentini},
  \citenamefont {Gramegna}, \citenamefont {Degiovanni},\ and\ \citenamefont
  {Genovese}}]{marletto19pdoentanglement}%
  \BibitemOpen
  \bibfield  {author} {\bibinfo {author} {\bibfnamefont {C.}~\bibnamefont
  {Marletto}}, \bibinfo {author} {\bibfnamefont {V.}~\bibnamefont {Vedral}},
  \bibinfo {author} {\bibfnamefont {S.}~\bibnamefont {Virzi}}, \bibinfo
  {author} {\bibfnamefont {E.}~\bibnamefont {Rebufello}}, \bibinfo {author}
  {\bibfnamefont {A.}~\bibnamefont {Avella}}, \bibinfo {author} {\bibfnamefont
  {F.}~\bibnamefont {Piacentini}}, \bibinfo {author} {\bibfnamefont
  {M.}~\bibnamefont {Gramegna}}, \bibinfo {author} {\bibfnamefont {I.~P.}\
  \bibnamefont {Degiovanni}},\ and\ \bibinfo {author} {\bibfnamefont
  {M.}~\bibnamefont {Genovese}},\ }\bibfield  {title} {\bibinfo {title}
  {Theoretical description and experimental simulation of quantum entanglement
  near open time-like curves via pseudo-density operators},\ }\href
  {https://doi.org/10.1038/s41467-018-08100-1} {\bibfield  {journal} {\bibinfo
  {journal} {Nat Commun}\ }\textbf {\bibinfo {volume} {10}},\ \bibinfo {pages}
  {182} (\bibinfo {year} {2019})}\BibitemShut {NoStop}%
\bibitem [{\citenamefont {Pisarczyk}\ \emph {et~al.}(2019)\citenamefont
  {Pisarczyk}, \citenamefont {Zhao}, \citenamefont {Ouyang}, \citenamefont
  {Vedral},\ and\ \citenamefont {Fitzsimons}}]{pisarczyk19pdocausallimit}%
  \BibitemOpen
  \bibfield  {author} {\bibinfo {author} {\bibfnamefont {R.}~\bibnamefont
  {Pisarczyk}}, \bibinfo {author} {\bibfnamefont {Z.}~\bibnamefont {Zhao}},
  \bibinfo {author} {\bibfnamefont {Y.}~\bibnamefont {Ouyang}}, \bibinfo
  {author} {\bibfnamefont {V.}~\bibnamefont {Vedral}},\ and\ \bibinfo {author}
  {\bibfnamefont {J.~F.}\ \bibnamefont {Fitzsimons}},\ }\bibfield  {title}
  {\bibinfo {title} {Causal limit on quantum communication},\ }\href
  {https://doi.org/10.1103/PhysRevLett.123.150502} {\bibfield  {journal}
  {\bibinfo  {journal} {Physical review letters}\ }\textbf {\bibinfo {volume}
  {123}},\ \bibinfo {pages} {150502} (\bibinfo {year} {2019})}\BibitemShut
  {NoStop}%
\bibitem [{\citenamefont {Marletto}\ \emph {et~al.}(2020)\citenamefont
  {Marletto}, \citenamefont {Vedral}, \citenamefont {Virz{\`\i}}, \citenamefont
  {Rebufello}, \citenamefont {Avella}, \citenamefont {Piacentini},
  \citenamefont {Gramegna}, \citenamefont {Degiovanni},\ and\ \citenamefont
  {Genovese}}]{marletto2020non}%
  \BibitemOpen
  \bibfield  {author} {\bibinfo {author} {\bibfnamefont {C.}~\bibnamefont
  {Marletto}}, \bibinfo {author} {\bibfnamefont {V.}~\bibnamefont {Vedral}},
  \bibinfo {author} {\bibfnamefont {S.}~\bibnamefont {Virz{\`\i}}}, \bibinfo
  {author} {\bibfnamefont {E.}~\bibnamefont {Rebufello}}, \bibinfo {author}
  {\bibfnamefont {A.}~\bibnamefont {Avella}}, \bibinfo {author} {\bibfnamefont
  {F.}~\bibnamefont {Piacentini}}, \bibinfo {author} {\bibfnamefont
  {M.}~\bibnamefont {Gramegna}}, \bibinfo {author} {\bibfnamefont {I.~P.}\
  \bibnamefont {Degiovanni}},\ and\ \bibinfo {author} {\bibfnamefont
  {M.}~\bibnamefont {Genovese}},\ }\bibfield  {title} {\bibinfo {title}
  {Non-monogamy of spatio-temporal correlations and the black hole information
  loss paradox},\ }\href {https://doi.org/10.3390/e22020228} {\bibfield
  {journal} {\bibinfo  {journal} {Entropy}\ }\textbf {\bibinfo {volume} {22}},\
  \bibinfo {pages} {228} (\bibinfo {year} {2020})}\BibitemShut {NoStop}%
\bibitem [{\citenamefont {Zhang}\ \emph
  {et~al.}(2020{\natexlab{b}})\citenamefont {Zhang}, \citenamefont {Hou},\ and\
  \citenamefont {Song}}]{zhang20observationscheme}%
  \BibitemOpen
  \bibfield  {author} {\bibinfo {author} {\bibfnamefont {C.}~\bibnamefont
  {Zhang}}, \bibinfo {author} {\bibfnamefont {Y.}~\bibnamefont {Hou}},\ and\
  \bibinfo {author} {\bibfnamefont {D.}~\bibnamefont {Song}},\ }\bibfield
  {title} {\bibinfo {title} {Quantum observation scheme universally identifying
  causalities from correlations},\ }\bibfield  {journal} {\bibinfo  {journal}
  {Physical Review A}\ }\textbf {\bibinfo {volume} {101}},\ \href
  {https://doi.org/10.1103/PhysRevA.101.062103} {10.1103/PhysRevA.101.062103}
  (\bibinfo {year} {2020}{\natexlab{b}})\BibitemShut {NoStop}%
\bibitem [{\citenamefont {Marletto}\ \emph {et~al.}(2021)\citenamefont
  {Marletto}, \citenamefont {Vedral}, \citenamefont {Virz{\`\i}}, \citenamefont
  {Avella}, \citenamefont {Piacentini}, \citenamefont {Gramegna}, \citenamefont
  {Degiovanni},\ and\ \citenamefont {Genovese}}]{marletto21temporaltele}%
  \BibitemOpen
  \bibfield  {author} {\bibinfo {author} {\bibfnamefont {C.}~\bibnamefont
  {Marletto}}, \bibinfo {author} {\bibfnamefont {V.}~\bibnamefont {Vedral}},
  \bibinfo {author} {\bibfnamefont {S.}~\bibnamefont {Virz{\`\i}}}, \bibinfo
  {author} {\bibfnamefont {A.}~\bibnamefont {Avella}}, \bibinfo {author}
  {\bibfnamefont {F.}~\bibnamefont {Piacentini}}, \bibinfo {author}
  {\bibfnamefont {M.}~\bibnamefont {Gramegna}}, \bibinfo {author}
  {\bibfnamefont {I.~P.}\ \bibnamefont {Degiovanni}},\ and\ \bibinfo {author}
  {\bibfnamefont {M.}~\bibnamefont {Genovese}},\ }\bibfield  {title} {\bibinfo
  {title} {Temporal teleportation with pseudo-density operators: How dynamics
  emerges from temporal entanglement},\ }\href
  {https://doi.org/10.1126/sciadv.abe4742} {\bibfield  {journal} {\bibinfo
  {journal} {Science Advances}\ }\textbf {\bibinfo {volume} {7}},\ \bibinfo
  {pages} {eabe4742} (\bibinfo {year} {2021})}\BibitemShut {NoStop}%
\bibitem [{\citenamefont {Utagi}(2021)}]{utagi2021pdomarkov}%
  \BibitemOpen
  \bibfield  {author} {\bibinfo {author} {\bibfnamefont {S.}~\bibnamefont
  {Utagi}},\ }\bibfield  {title} {\bibinfo {title} {Quantum causal correlations
  and non-markovianity of quantum evolution},\ }\href
  {https://doi.org/10.1016/j.physleta.2020.126983} {\bibfield  {journal}
  {\bibinfo  {journal} {Physics Letters A}\ }\textbf {\bibinfo {volume}
  {386}},\ \bibinfo {pages} {126983} (\bibinfo {year} {2021})}\BibitemShut
  {NoStop}%
\bibitem [{\citenamefont {Jia}\ \emph {et~al.}(2023)\citenamefont {Jia},
  \citenamefont {Song},\ and\ \citenamefont {Kaszlikowski}}]{jia23marginal}%
  \BibitemOpen
  \bibfield  {author} {\bibinfo {author} {\bibfnamefont {Z.}~\bibnamefont
  {Jia}}, \bibinfo {author} {\bibfnamefont {M.}~\bibnamefont {Song}},\ and\
  \bibinfo {author} {\bibfnamefont {D.}~\bibnamefont {Kaszlikowski}},\
  }\bibfield  {title} {\bibinfo {title} {Quantum space-time marginal problem:
  global causal structure from local causal information},\ }\href
  {https://doi.org/10.1088/1367-2630/ad1416} {\bibfield  {journal} {\bibinfo
  {journal} {New Journal of Physics}\ }\textbf {\bibinfo {volume} {25}},\
  \bibinfo {pages} {123038} (\bibinfo {year} {2023})}\BibitemShut {NoStop}%
\bibitem [{\citenamefont {Liu}\ \emph {et~al.}(2024{\natexlab{a}})\citenamefont
  {Liu}, \citenamefont {Chen},\ and\ \citenamefont
  {Dahlsten}}]{liu2024arrowoftime}%
  \BibitemOpen
  \bibfield  {author} {\bibinfo {author} {\bibfnamefont {X.}~\bibnamefont
  {Liu}}, \bibinfo {author} {\bibfnamefont {Q.}~\bibnamefont {Chen}},\ and\
  \bibinfo {author} {\bibfnamefont {O.}~\bibnamefont {Dahlsten}},\ }\bibfield
  {title} {\bibinfo {title} {Inferring the arrow of time in quantum
  spatiotemporal correlations},\ }\href
  {https://doi.org/10.1103/PhysRevA.109.032219} {\bibfield  {journal} {\bibinfo
   {journal} {Physical Review A}\ }\textbf {\bibinfo {volume} {109}},\ \bibinfo
  {pages} {032219} (\bibinfo {year} {2024}{\natexlab{a}})}\BibitemShut
  {NoStop}%
\bibitem [{\citenamefont {Liu}\ \emph {et~al.}(2025{\natexlab{a}})\citenamefont
  {Liu}, \citenamefont {Qiu}, \citenamefont {Dahlsten},\ and\ \citenamefont
  {Vedral}}]{liu2025lighttouch}%
  \BibitemOpen
  \bibfield  {author} {\bibinfo {author} {\bibfnamefont {X.}~\bibnamefont
  {Liu}}, \bibinfo {author} {\bibfnamefont {Y.}~\bibnamefont {Qiu}}, \bibinfo
  {author} {\bibfnamefont {O.}~\bibnamefont {Dahlsten}},\ and\ \bibinfo
  {author} {\bibfnamefont {V.}~\bibnamefont {Vedral}},\ }\bibfield  {title}
  {\bibinfo {title} {Quantum causal inference with extremely light touch},\
  }\href {https://doi.org/10.1038/s41534-024-00956-0} {\bibfield  {journal}
  {\bibinfo  {journal} {npj Quantum Information}\ }\textbf {\bibinfo {volume}
  {11}},\ \bibinfo {pages} {1} (\bibinfo {year}
  {2025}{\natexlab{a}})}\BibitemShut {NoStop}%
\bibitem [{\citenamefont {Fullwood}(2025)}]{fullwood2025pdo}%
  \BibitemOpen
  \bibfield  {author} {\bibinfo {author} {\bibfnamefont {J.}~\bibnamefont
  {Fullwood}},\ }\bibfield  {title} {\bibinfo {title} {Quantum dynamics as a
  pseudo-density matrix},\ }\href {https://doi.org/10.22331/q-2025-04-24-1719}
  {\bibfield  {journal} {\bibinfo  {journal} {{Quantum}}\ }\textbf {\bibinfo
  {volume} {9}},\ \bibinfo {pages} {1719} (\bibinfo {year} {2025})}\BibitemShut
  {NoStop}%
\bibitem [{\citenamefont {Liu}\ \emph {et~al.}(2025{\natexlab{b}})\citenamefont
  {Liu}, \citenamefont {Liu}, \citenamefont {Chen}, \citenamefont {Nie},
  \citenamefont {Liu},\ and\ \citenamefont {Lu}}]{liu2025randomized}%
  \BibitemOpen
  \bibfield  {author} {\bibinfo {author} {\bibfnamefont {H.}~\bibnamefont
  {Liu}}, \bibinfo {author} {\bibfnamefont {Z.}~\bibnamefont {Liu}}, \bibinfo
  {author} {\bibfnamefont {S.}~\bibnamefont {Chen}}, \bibinfo {author}
  {\bibfnamefont {X.}~\bibnamefont {Nie}}, \bibinfo {author} {\bibfnamefont
  {X.}~\bibnamefont {Liu}},\ and\ \bibinfo {author} {\bibfnamefont
  {D.}~\bibnamefont {Lu}},\ }\bibfield  {title} {\bibinfo {title} {Certifying
  quantum temporal correlation via randomized measurements: Theory and
  experiment},\ }\href {https://doi.org/10.1103/PhysRevLett.134.040201}
  {\bibfield  {journal} {\bibinfo  {journal} {Physical Review Letters}\
  }\textbf {\bibinfo {volume} {134}},\ \bibinfo {pages} {040201} (\bibinfo
  {year} {2025}{\natexlab{b}})}\BibitemShut {NoStop}%
\bibitem [{Note1()}]{Note1}%
  \BibitemOpen
  \bibinfo {note} {It is the widely accepted convention to omit the identity
  operator from the canonical commutation relation.}\BibitemShut {Stop}%
\bibitem [{\citenamefont {Weedbrook}\ \emph {et~al.}(2012)\citenamefont
  {Weedbrook}, \citenamefont {Pirandola}, \citenamefont
  {Garc{\'\i}a-Patr{\'o}n}, \citenamefont {Cerf}, \citenamefont {Ralph},
  \citenamefont {Shapiro},\ and\ \citenamefont {Lloyd}}]{weedbrook12gaussian}%
  \BibitemOpen
  \bibfield  {author} {\bibinfo {author} {\bibfnamefont {C.}~\bibnamefont
  {Weedbrook}}, \bibinfo {author} {\bibfnamefont {S.}~\bibnamefont
  {Pirandola}}, \bibinfo {author} {\bibfnamefont {R.}~\bibnamefont
  {Garc{\'\i}a-Patr{\'o}n}}, \bibinfo {author} {\bibfnamefont {N.~J.}\
  \bibnamefont {Cerf}}, \bibinfo {author} {\bibfnamefont {T.~C.}\ \bibnamefont
  {Ralph}}, \bibinfo {author} {\bibfnamefont {J.~H.}\ \bibnamefont {Shapiro}},\
  and\ \bibinfo {author} {\bibfnamefont {S.}~\bibnamefont {Lloyd}},\ }\bibfield
   {title} {\bibinfo {title} {Gaussian quantum information},\ }\href
  {https://doi.org/10.1103/RevModPhys.84.621} {\bibfield  {journal} {\bibinfo
  {journal} {Reviews of Modern Physics}\ }\textbf {\bibinfo {volume} {84}},\
  \bibinfo {pages} {621} (\bibinfo {year} {2012})}\BibitemShut {NoStop}%
\bibitem [{\citenamefont {Holevo}\ and\ \citenamefont
  {Werner}(2001)}]{holevo2001evaluating}%
  \BibitemOpen
  \bibfield  {author} {\bibinfo {author} {\bibfnamefont {A.~S.}\ \bibnamefont
  {Holevo}}\ and\ \bibinfo {author} {\bibfnamefont {R.~F.}\ \bibnamefont
  {Werner}},\ }\bibfield  {title} {\bibinfo {title} {Evaluating capacities of
  bosonic gaussian channels},\ }\href
  {https://doi.org/10.1103/PhysRevA.63.032312} {\bibfield  {journal} {\bibinfo
  {journal} {Physical Review A}\ }\textbf {\bibinfo {volume} {63}},\ \bibinfo
  {pages} {032312} (\bibinfo {year} {2001})}\BibitemShut {NoStop}%
\bibitem [{\citenamefont {Serafini}(2023)}]{serafini23qcv}%
  \BibitemOpen
  \bibfield  {author} {\bibinfo {author} {\bibfnamefont {A.}~\bibnamefont
  {Serafini}},\ }\href {https://doi.org/10.1201/9781003250975} {\emph {\bibinfo
  {title} {Quantum continuous variables: a primer of theoretical methods}}}\
  (\bibinfo  {publisher} {CRC press},\ \bibinfo {year} {2023})\BibitemShut
  {NoStop}%
\bibitem [{\citenamefont {Chitambar}\ and\ \citenamefont
  {Gour}(2019)}]{chitambar2019quantum}%
  \BibitemOpen
  \bibfield  {author} {\bibinfo {author} {\bibfnamefont {E.}~\bibnamefont
  {Chitambar}}\ and\ \bibinfo {author} {\bibfnamefont {G.}~\bibnamefont
  {Gour}},\ }\bibfield  {title} {\bibinfo {title} {Quantum resource theories},\
  }\href {https://doi.org/10.1103/RevModPhys.91.025001} {\bibfield  {journal}
  {\bibinfo  {journal} {Reviews of modern physics}\ }\textbf {\bibinfo {volume}
  {91}},\ \bibinfo {pages} {025001} (\bibinfo {year} {2019})}\BibitemShut
  {NoStop}%
\bibitem [{\citenamefont {Napoli}\ \emph {et~al.}(2016)\citenamefont {Napoli},
  \citenamefont {Bromley}, \citenamefont {Cianciaruso}, \citenamefont {Piani},
  \citenamefont {Johnston},\ and\ \citenamefont
  {Adesso}}]{napoli2016robustness}%
  \BibitemOpen
  \bibfield  {author} {\bibinfo {author} {\bibfnamefont {C.}~\bibnamefont
  {Napoli}}, \bibinfo {author} {\bibfnamefont {T.~R.}\ \bibnamefont {Bromley}},
  \bibinfo {author} {\bibfnamefont {M.}~\bibnamefont {Cianciaruso}}, \bibinfo
  {author} {\bibfnamefont {M.}~\bibnamefont {Piani}}, \bibinfo {author}
  {\bibfnamefont {N.}~\bibnamefont {Johnston}},\ and\ \bibinfo {author}
  {\bibfnamefont {G.}~\bibnamefont {Adesso}},\ }\bibfield  {title} {\bibinfo
  {title} {Robustness of coherence: an operational and observable measure of
  quantum coherence},\ }\href {https://doi.org/10.1103/PhysRevLett.116.150502}
  {\bibfield  {journal} {\bibinfo  {journal} {Physical review letters}\
  }\textbf {\bibinfo {volume} {116}},\ \bibinfo {pages} {150502} (\bibinfo
  {year} {2016})}\BibitemShut {NoStop}%
\bibitem [{\citenamefont {Peres}(1996)}]{peres96ppt}%
  \BibitemOpen
  \bibfield  {author} {\bibinfo {author} {\bibfnamefont {A.}~\bibnamefont
  {Peres}},\ }\bibfield  {title} {\bibinfo {title} {Separability criterion for
  density matrices},\ }\href {https://doi.org/10.1103/PhysRevLett.77.1413}
  {\bibfield  {journal} {\bibinfo  {journal} {Physical Review Letters}\
  }\textbf {\bibinfo {volume} {77}},\ \bibinfo {pages} {1413} (\bibinfo {year}
  {1996})}\BibitemShut {NoStop}%
\bibitem [{\citenamefont {Horodecki}\ \emph {et~al.}(1996)\citenamefont
  {Horodecki}, \citenamefont {Horodecki},\ and\ \citenamefont
  {Horodecki}}]{horodecki96ppt}%
  \BibitemOpen
  \bibfield  {author} {\bibinfo {author} {\bibfnamefont {M.}~\bibnamefont
  {Horodecki}}, \bibinfo {author} {\bibfnamefont {P.}~\bibnamefont
  {Horodecki}},\ and\ \bibinfo {author} {\bibfnamefont {R.}~\bibnamefont
  {Horodecki}},\ }\bibfield  {title} {\bibinfo {title} {Separability of mixed
  states: necessary and sufficient conditions},\ }\href
  {https://doi.org/https://doi.org/10.1016/S0375-9601(96)00706-2} {\bibfield
  {journal} {\bibinfo  {journal} {Physics Letters A}\ }\textbf {\bibinfo
  {volume} {223}},\ \bibinfo {pages} {1} (\bibinfo {year} {1996})}\BibitemShut
  {NoStop}%
\bibitem [{\citenamefont {Werner}\ and\ \citenamefont
  {Wolf}(2001)}]{werner01bound}%
  \BibitemOpen
  \bibfield  {author} {\bibinfo {author} {\bibfnamefont {R.~F.}\ \bibnamefont
  {Werner}}\ and\ \bibinfo {author} {\bibfnamefont {M.~M.}\ \bibnamefont
  {Wolf}},\ }\bibfield  {title} {\bibinfo {title} {Bound entangled gaussian
  states},\ }\href {https://doi.org/10.1103/PhysRevLett.86.3658} {\bibfield
  {journal} {\bibinfo  {journal} {Physical review letters}\ }\textbf {\bibinfo
  {volume} {86}},\ \bibinfo {pages} {3658} (\bibinfo {year}
  {2001})}\BibitemShut {NoStop}%
\bibitem [{\citenamefont {Serafini}\ \emph {et~al.}(2003)\citenamefont
  {Serafini}, \citenamefont {Illuminati},\ and\ \citenamefont
  {De~Siena}}]{serafini03symplectic}%
  \BibitemOpen
  \bibfield  {author} {\bibinfo {author} {\bibfnamefont {A.}~\bibnamefont
  {Serafini}}, \bibinfo {author} {\bibfnamefont {F.}~\bibnamefont
  {Illuminati}},\ and\ \bibinfo {author} {\bibfnamefont {S.}~\bibnamefont
  {De~Siena}},\ }\bibfield  {title} {\bibinfo {title} {Symplectic invariants,
  entropic measures and correlations of gaussian states},\ }\href
  {https://doi.org/10.1088/0953-4075/37/2/L02} {\bibfield  {journal} {\bibinfo
  {journal} {Journal of Physics B: Atomic, Molecular and Optical Physics}\
  }\textbf {\bibinfo {volume} {37}},\ \bibinfo {pages} {L21} (\bibinfo {year}
  {2003})}\BibitemShut {NoStop}%
\bibitem [{\citenamefont {Vidal}\ and\ \citenamefont
  {Werner}(2002)}]{vidal02negativity}%
  \BibitemOpen
  \bibfield  {author} {\bibinfo {author} {\bibfnamefont {G.}~\bibnamefont
  {Vidal}}\ and\ \bibinfo {author} {\bibfnamefont {R.~F.}\ \bibnamefont
  {Werner}},\ }\bibfield  {title} {\bibinfo {title} {Computable measure of
  entanglement},\ }\href {https://doi.org/10.1103/PhysRevA.65.032314}
  {\bibfield  {journal} {\bibinfo  {journal} {Phys. Rev. A}\ }\textbf {\bibinfo
  {volume} {65}},\ \bibinfo {pages} {032314} (\bibinfo {year}
  {2002})}\BibitemShut {NoStop}%
\bibitem [{\citenamefont {Plenio}(2005)}]{plenio2005logarithmic}%
  \BibitemOpen
  \bibfield  {author} {\bibinfo {author} {\bibfnamefont {M.~B.}\ \bibnamefont
  {Plenio}},\ }\bibfield  {title} {\bibinfo {title} {Logarithmic negativity: a
  full entanglement monotone that is not convex},\ }\href
  {https://doi.org/10.1103/PhysRevLett.95.090503} {\bibfield  {journal}
  {\bibinfo  {journal} {Physical review letters}\ }\textbf {\bibinfo {volume}
  {95}},\ \bibinfo {pages} {090503} (\bibinfo {year} {2005})}\BibitemShut
  {NoStop}%
\bibitem [{Note2()}]{Note2}%
  \BibitemOpen
  \bibinfo {note} {We use the natural logarithm for the logarithmic negativity
  throughout the paper.}\BibitemShut {Stop}%
\bibitem [{\citenamefont {Wu}\ \emph {et~al.}(2025)\citenamefont {Wu},
  \citenamefont {Parzygnat}, \citenamefont {Vedral},\ and\ \citenamefont
  {Fullwood}}]{wu2025mutual}%
  \BibitemOpen
  \bibfield  {author} {\bibinfo {author} {\bibfnamefont {Z.}~\bibnamefont
  {Wu}}, \bibinfo {author} {\bibfnamefont {A.~J.}\ \bibnamefont {Parzygnat}},
  \bibinfo {author} {\bibfnamefont {V.}~\bibnamefont {Vedral}},\ and\ \bibinfo
  {author} {\bibfnamefont {J.}~\bibnamefont {Fullwood}},\ }\bibfield  {title}
  {\bibinfo {title} {Quantum mutual information in time},\ }\href
  {https://doi.org/10.1088/1367-2630/adde7d} {\bibfield  {journal} {\bibinfo
  {journal} {New Journal of Physics}\ }\textbf {\bibinfo {volume} {27}},\
  \bibinfo {pages} {064504} (\bibinfo {year} {2025})}\BibitemShut {NoStop}%
\bibitem [{\citenamefont {Jia}\ and\ \citenamefont
  {Kaszlikowski}(2024)}]{jia2024double}%
  \BibitemOpen
  \bibfield  {author} {\bibinfo {author} {\bibfnamefont {Z.}~\bibnamefont
  {Jia}}\ and\ \bibinfo {author} {\bibfnamefont {D.}~\bibnamefont
  {Kaszlikowski}},\ }\bibfield  {title} {\bibinfo {title} {The spatiotemporal
  doubled-density operator: A unified framework for analyzing spatial and
  temporal quantum processes},\ }\href {https://doi.org/10.1002/qute.202400102}
  {\bibfield  {journal} {\bibinfo  {journal} {Advanced Quantum Technologies}\
  }\textbf {\bibinfo {volume} {7}},\ \bibinfo {pages} {2400102} (\bibinfo
  {year} {2024})}\BibitemShut {NoStop}%
\bibitem [{\citenamefont {Liu}\ \emph {et~al.}(2024{\natexlab{b}})\citenamefont
  {Liu}, \citenamefont {Jia}, \citenamefont {Qiu}, \citenamefont {Li},\ and\
  \citenamefont {Dahlsten}}]{liu2024unification}%
  \BibitemOpen
  \bibfield  {author} {\bibinfo {author} {\bibfnamefont {X.}~\bibnamefont
  {Liu}}, \bibinfo {author} {\bibfnamefont {Z.}~\bibnamefont {Jia}}, \bibinfo
  {author} {\bibfnamefont {Y.}~\bibnamefont {Qiu}}, \bibinfo {author}
  {\bibfnamefont {F.}~\bibnamefont {Li}},\ and\ \bibinfo {author}
  {\bibfnamefont {O.}~\bibnamefont {Dahlsten}},\ }\bibfield  {title} {\bibinfo
  {title} {Unification of spatiotemporal quantum formalisms: mapping between
  process and pseudo-density matrices via multiple-time states},\ }\href
  {https://doi.org/10.1088/1367-2630/ad264c} {\bibfield  {journal} {\bibinfo
  {journal} {New Journal of Physics}\ }\textbf {\bibinfo {volume} {26}},\
  \bibinfo {pages} {033008} (\bibinfo {year} {2024}{\natexlab{b}})}\BibitemShut
  {NoStop}%
\bibitem [{\citenamefont {Leifer}\ and\ \citenamefont
  {Spekkens}(2013)}]{leifer13causalbayesian}%
  \BibitemOpen
  \bibfield  {author} {\bibinfo {author} {\bibfnamefont {M.~S.}\ \bibnamefont
  {Leifer}}\ and\ \bibinfo {author} {\bibfnamefont {R.~W.}\ \bibnamefont
  {Spekkens}},\ }\bibfield  {title} {\bibinfo {title} {Towards a formulation of
  quantum theory as a causally neutral theory of bayesian inference},\ }\href
  {https://doi.org/10.1103/PhysRevA.88.052130} {\bibfield  {journal} {\bibinfo
  {journal} {Physical Review A}\ }\textbf {\bibinfo {volume} {88}},\ \bibinfo
  {pages} {052130} (\bibinfo {year} {2013})}\BibitemShut {NoStop}%
\bibitem [{\citenamefont {Horsman}\ \emph {et~al.}(2017)\citenamefont
  {Horsman}, \citenamefont {Heunen}, \citenamefont {Pusey}, \citenamefont
  {Barrett},\ and\ \citenamefont {Spekkens}}]{horsman17pdoandothers}%
  \BibitemOpen
  \bibfield  {author} {\bibinfo {author} {\bibfnamefont {D.}~\bibnamefont
  {Horsman}}, \bibinfo {author} {\bibfnamefont {C.}~\bibnamefont {Heunen}},
  \bibinfo {author} {\bibfnamefont {M.~F.}\ \bibnamefont {Pusey}}, \bibinfo
  {author} {\bibfnamefont {J.}~\bibnamefont {Barrett}},\ and\ \bibinfo {author}
  {\bibfnamefont {R.~W.}\ \bibnamefont {Spekkens}},\ }\bibfield  {title}
  {\bibinfo {title} {{Can a quantum state over time resemble a quantum state at
  a single time?}},\ }\href {https://doi.org/10.1098/rspa.2017.0395} {\bibfield
   {journal} {\bibinfo  {journal} {Proceedings of the Royal Society A:
  Mathematical, Physical and Engineering Sciences}\ }\textbf {\bibinfo {volume}
  {473}},\ \bibinfo {pages} {20170395} (\bibinfo {year} {2017})}\BibitemShut
  {NoStop}%
\bibitem [{\citenamefont {Fullwood}\ and\ \citenamefont
  {Parzygnat}(2022)}]{fullwood22onquantumstates}%
  \BibitemOpen
  \bibfield  {author} {\bibinfo {author} {\bibfnamefont {J.}~\bibnamefont
  {Fullwood}}\ and\ \bibinfo {author} {\bibfnamefont {A.~J.}\ \bibnamefont
  {Parzygnat}},\ }\bibfield  {title} {\bibinfo {title} {On quantum states over
  time},\ }\href {https://doi.org/10.1098/rspa.2022.0104} {\bibfield  {journal}
  {\bibinfo  {journal} {Proceedings of the Royal Society A: Mathematical,
  Physical and Engineering Sciences}\ }\textbf {\bibinfo {volume} {478}},\
  \bibinfo {pages} {20220104} (\bibinfo {year} {2022})}\BibitemShut {NoStop}%
\bibitem [{\citenamefont {Lie}\ and\ \citenamefont {Ng}(2024)}]{lie2024unique}%
  \BibitemOpen
  \bibfield  {author} {\bibinfo {author} {\bibfnamefont {S.~H.}\ \bibnamefont
  {Lie}}\ and\ \bibinfo {author} {\bibfnamefont {N.~H.}\ \bibnamefont {Ng}},\
  }\bibfield  {title} {\bibinfo {title} {Quantum state over time is unique},\
  }\href {https://doi.org/10.1103/PhysRevResearch.6.033144} {\bibfield
  {journal} {\bibinfo  {journal} {Physical Review Research}\ }\textbf {\bibinfo
  {volume} {6}},\ \bibinfo {pages} {033144} (\bibinfo {year}
  {2024})}\BibitemShut {NoStop}%
\bibitem [{\citenamefont {Lie}\ and\ \citenamefont
  {Fullwood}(2024)}]{lie2024uniquemulti}%
  \BibitemOpen
  \bibfield  {author} {\bibinfo {author} {\bibfnamefont {S.~H.}\ \bibnamefont
  {Lie}}\ and\ \bibinfo {author} {\bibfnamefont {J.}~\bibnamefont {Fullwood}},\
  }\bibfield  {title} {\bibinfo {title} {Unique multipartite extension of
  quantum states over time},\ }\bibfield  {journal} {\bibinfo  {journal} {arXiv
  preprint}\ }\href {https://doi.org/10.48550/arXiv.2410.22630}
  {10.48550/arXiv.2410.22630} (\bibinfo {year} {2024})\BibitemShut {NoStop}%
\bibitem [{\citenamefont {Lie}\ and\ \citenamefont
  {Kwon}(2025)}]{lie2025interferometry}%
  \BibitemOpen
  \bibfield  {author} {\bibinfo {author} {\bibfnamefont {S.~H.}\ \bibnamefont
  {Lie}}\ and\ \bibinfo {author} {\bibfnamefont {H.}~\bibnamefont {Kwon}},\
  }\bibfield  {title} {\bibinfo {title} {Probing quantum states over spacetime
  through interferometry},\ }\bibfield  {journal} {\bibinfo  {journal} {arXiv
  preprint}\ }\href {https://doi.org/10.48550/arXiv.2507.19258}
  {10.48550/arXiv.2507.19258} (\bibinfo {year} {2025})\BibitemShut {NoStop}%
\bibitem [{\citenamefont {Aw}\ \emph {et~al.}(2021)\citenamefont {Aw},
  \citenamefont {Buscemi},\ and\ \citenamefont {Scarani}}]{aw2021retrodiction}%
  \BibitemOpen
  \bibfield  {author} {\bibinfo {author} {\bibfnamefont {C.~C.}\ \bibnamefont
  {Aw}}, \bibinfo {author} {\bibfnamefont {F.}~\bibnamefont {Buscemi}},\ and\
  \bibinfo {author} {\bibfnamefont {V.}~\bibnamefont {Scarani}},\ }\bibfield
  {title} {\bibinfo {title} {Fluctuation theorems with retrodiction rather than
  reverse processes},\ }\bibfield  {journal} {\bibinfo  {journal} {AVS Quantum
  Science}\ }\textbf {\bibinfo {volume} {3}},\ \href
  {https://doi.org/https://doi.org/10.1116/5.0060893}
  {https://doi.org/10.1116/5.0060893} (\bibinfo {year} {2021})\BibitemShut
  {NoStop}%
\bibitem [{\citenamefont {Parzygnat}\ and\ \citenamefont
  {Fullwood}(2023)}]{parzygnat22binverse}%
  \BibitemOpen
  \bibfield  {author} {\bibinfo {author} {\bibfnamefont {A.~J.}\ \bibnamefont
  {Parzygnat}}\ and\ \bibinfo {author} {\bibfnamefont {J.}~\bibnamefont
  {Fullwood}},\ }\bibfield  {title} {\bibinfo {title} {From time-reversal
  symmetry to quantum bayes' rules},\ }\href
  {https://doi.org/10.1103/PRXQuantum.4.020334} {\bibfield  {journal} {\bibinfo
   {journal} {PRX Quantum}\ }\textbf {\bibinfo {volume} {4}},\ \bibinfo {pages}
  {020334} (\bibinfo {year} {2023})}\BibitemShut {NoStop}%
\bibitem [{\citenamefont {Parzygnat}\ and\ \citenamefont
  {Buscemi}(2023)}]{parzygnat2023axioms}%
  \BibitemOpen
  \bibfield  {author} {\bibinfo {author} {\bibfnamefont {A.~J.}\ \bibnamefont
  {Parzygnat}}\ and\ \bibinfo {author} {\bibfnamefont {F.}~\bibnamefont
  {Buscemi}},\ }\bibfield  {title} {\bibinfo {title} {Axioms for retrodiction:
  achieving time-reversal symmetry with a prior},\ }\href
  {https://doi.org/10.22331/q-2023-05-23-1013} {\bibfield  {journal} {\bibinfo
  {journal} {{Quantum}}\ }\textbf {\bibinfo {volume} {7}},\ \bibinfo {pages}
  {1013} (\bibinfo {year} {2023})}\BibitemShut {NoStop}%
\bibitem [{\citenamefont {Petz}(1986)}]{petz1986sufficient}%
  \BibitemOpen
  \bibfield  {author} {\bibinfo {author} {\bibfnamefont {D.}~\bibnamefont
  {Petz}},\ }\bibfield  {title} {\bibinfo {title} {Sufficient subalgebras and
  the relative entropy of states of a von neumann algebra},\ }\href
  {https://doi.org/10.1007/BF01212345} {\bibfield  {journal} {\bibinfo
  {journal} {Communications in mathematical physics}\ }\textbf {\bibinfo
  {volume} {105}},\ \bibinfo {pages} {123} (\bibinfo {year}
  {1986})}\BibitemShut {NoStop}%
\bibitem [{\citenamefont {Petz}(1988)}]{petz1988sufficiency}%
  \BibitemOpen
  \bibfield  {author} {\bibinfo {author} {\bibfnamefont {D.}~\bibnamefont
  {Petz}},\ }\bibfield  {title} {\bibinfo {title} {Sufficiency of channels over
  von neumann algebras},\ }\href {https://doi.org/10.1093/qmath/39.1.97}
  {\bibfield  {journal} {\bibinfo  {journal} {The Quarterly Journal of
  Mathematics}\ }\textbf {\bibinfo {volume} {39}},\ \bibinfo {pages} {97}
  (\bibinfo {year} {1988})}\BibitemShut {NoStop}%
\bibitem [{\citenamefont {Schur}(1918)}]{Schur1918schurcomplement}%
  \BibitemOpen
  \bibfield  {author} {\bibinfo {author} {\bibfnamefont {J.}~\bibnamefont
  {Schur}},\ }\bibfield  {title} {\bibinfo {title} {{\"U}ber potenzreihen, die
  im innern des einheitskreises beschr{\"a}nkt sind.},\ }\href
  {http://eudml.org/doc/149476} {\bibfield  {journal} {\bibinfo  {journal}
  {Journal f{\"u}r die reine und angewandte Mathematik}\ }\textbf {\bibinfo
  {volume} {148}},\ \bibinfo {pages} {122} (\bibinfo {year}
  {1918})}\BibitemShut {NoStop}%
\bibitem [{\citenamefont {Zhang}(2006)}]{zhang2006schur}%
  \BibitemOpen
  \bibfield  {author} {\bibinfo {author} {\bibfnamefont {F.}~\bibnamefont
  {Zhang}},\ }\href {https://doi.org/10.1007/b105056} {\emph {\bibinfo {title}
  {The Schur complement and its applications}}},\ Vol.~\bibinfo {volume} {4}\
  (\bibinfo  {publisher} {Springer Science \& Business Media},\ \bibinfo {year}
  {2006})\BibitemShut {NoStop}%
\bibitem [{\citenamefont {Serafini}\ \emph {et~al.}(2004)\citenamefont
  {Serafini}, \citenamefont {Illuminati}, \citenamefont {Paris},\ and\
  \citenamefont {De~Siena}}]{serafini2004eigensymplectic}%
  \BibitemOpen
  \bibfield  {author} {\bibinfo {author} {\bibfnamefont {A.}~\bibnamefont
  {Serafini}}, \bibinfo {author} {\bibfnamefont {F.}~\bibnamefont
  {Illuminati}}, \bibinfo {author} {\bibfnamefont {M.~G.}\ \bibnamefont
  {Paris}},\ and\ \bibinfo {author} {\bibfnamefont {S.}~\bibnamefont
  {De~Siena}},\ }\bibfield  {title} {\bibinfo {title} {Entanglement and purity
  of two-mode gaussian states in noisy channels},\ }\href
  {https://doi.org/10.1103/PhysRevA.69.022318} {\bibfield  {journal} {\bibinfo
  {journal} {Physical Review A}\ }\textbf {\bibinfo {volume} {69}},\ \bibinfo
  {pages} {022318} (\bibinfo {year} {2004})}\BibitemShut {NoStop}%
\bibitem [{\citenamefont {Oreshkov}\ \emph {et~al.}(2012)\citenamefont
  {Oreshkov}, \citenamefont {Costa},\ and\ \citenamefont
  {Brukner}}]{oreshkov12nocausalorder}%
  \BibitemOpen
  \bibfield  {author} {\bibinfo {author} {\bibfnamefont {O.}~\bibnamefont
  {Oreshkov}}, \bibinfo {author} {\bibfnamefont {F.}~\bibnamefont {Costa}},\
  and\ \bibinfo {author} {\bibfnamefont {C.}~\bibnamefont {Brukner}},\
  }\bibfield  {title} {\bibinfo {title} {Quantum correlations with no causal
  order},\ }\href {https://doi.org/10.1038/ncomms2076} {\bibfield  {journal}
  {\bibinfo  {journal} {Nature communications}\ }\textbf {\bibinfo {volume}
  {3}},\ \bibinfo {pages} {1} (\bibinfo {year} {2012})}\BibitemShut {NoStop}%
\end{thebibliography}%

\end{document}